\newtheorem{theorem}{Theorem}
\newtheorem{lemma}{Lemma}
\newtheorem{remark}{Remark}
\tikzstyle{block} = [draw,rectangle, rounded corners, minimum width=1cm, minimum height=0.8cm,text centered, line width=2pt ]
\tikzstyle{arrow} = [thick,->,>=stealth,line width=2pt]
\newcommand{\g}{\text{CIMA}}
\begin{document}
%
\title{A Common Information-Based Multiple Access Protocol Achieving Full Throughput and Linear Delay}

%

\author{Yi~Ouyang,~\IEEEmembership{Student Member,~IEEE} and Demosthenis~Teneketzis,~\IEEEmembership{Fellow,~IEEE}
\thanks{This paper was presented in part at \textit{the 2015 IEEE International Symposium on Information Theory} \cite{ouyangISIT}. 
}
\thanks{Y. Ouyang and D. Teneketzis are with the Department of Electrical Engineering and Computer Science, University of Michigan, Ann Arbor, MI (e-mail: ouyangyi@umich.edu; teneket@umich.edu).}
 }

\maketitle

\begin{abstract}
We consider a multiple access communication system where multiple users share a common collision channel.
Each user observes its local traffic and the feedback from the channel.
At each time instant the feedback from the channel is one of three messages: no transmission, successful transmission, collision.
The objective is to design a transmission protocol that coordinates the users' transmissions and achieves high throughput and low delay.

We present a decentralized Common Information-Based Multiple Access (CIMA) protocol that has the following features:
(i) it achieves the full throughput region of the collision channel;
(ii) it results in a delay that is linear in the number of users, and is significantly lower than that of CSMA protocols;
(iii) it avoids collisions without channel sensing.
\end{abstract}

\begin{IEEEkeywords}
Multiple access, decentralized control, common information
\end{IEEEkeywords}

%
\IEEEpeerreviewmaketitle

\section{Introduction}

Multiple access communication has played a crucial role in the operation of many networked systems, including satellite networks, radio networks, wired/wireless Local Area Networks (LANs), and data centers.
One important feature of multiple access communication is its decentralized information structure.
In general, when multiple users share the communication system,
coordination among them is essential to resolve collision issues.
In the absence of a centralized controller, it is challenging to design efficient user coordination mechanisms.

We consider a typical slotted multiple access communication system where multiple users share a common collision channel.
Each user is equipped with an infinite size buffer and observes Bernoulli arrivals to its own queue.
In addition to the local information, all users receive a common broadcast feedback from the channel.
The feedback indicates whether the previous transmission was successful (exactly one user transmitted), or it was a collision (more than one users transmitted), or the channel was idle. The objective is to design a transmission protocol that effectively coordinates the users' transmissions under the above described information structure.
In the design of transmission protocols, there are two major performance metrics of interest: throughput and delay.
The throughput region of a protocol is the set of arrival rates for which the users' queues are stable (see detailed definition in Section \ref{sub:model:throughput}) under the protocol. The delay performance of a protocol is the average waiting time of a packet in the communication system. An efficient transmission protocol should achieve the maximum throughput region and incur low transmission delay.

In this paper, we propose a common information (see \cite{nayyar2013decentralized,nayyar2014common}) based multiple access protocol ($\g$) that uses the common channel feedback to coordinate users. 
In $\g$, each user constructs upper bounds on the lengths of the queues of all users, including itself, based on previous transmission strategies and the common feedback.
Since the upper bounds are common knowledge, users can coordinate their transmission through these common upper bounds to avoid collision. 
We prove that without knowledge of any statistics, $\g$ achieves the full throughput region of the collision channel.
We also prove that the $\g$ protocol incurs low transmission delay; the delay is upper-bounded by a linear function of the number of users.

There is a rich literature on multiple access communications. 
Below we present a survey of this literature.

\subsection*{Related Work}

There are primarily two classes of protocols for the situation where the alphabet of the feedback channel is 
$\{0,1,e\}= \{ $no transmission, successful transmission, collision$\}$:
collision-free and contention-based protocols.
Time-division-multiple-access (TDMA) \cite{rom1990multiple} and adaptive TDMA \cite{papadimitriou2003adaptive, papadimitriou2006high} are collision-free protocols.
In adaptive TDMA protocols the (common) information provided by the feedback is used to adaptively coordinate users to avoid collision.
Adaptation resolves the problems due to asymmetric arrivals, and collision avoidance results in higher throughput and lower delay than TDMA. However, there is no theoretical analysis of adaptive TDMA protocols.
Backoff-type protocols and Aloha protocols \cite{rom1990multiple} allow for contention/collision.
Due to collision, most contention based protocols can not achieve full throughput.
However, polynomial back-off protocols, presented and analyzed in \cite{haastad1996analysis}, achieve full throughput. Nevertheless, polynomial back-off protocols have exponential delay performance in simulation.

Several types of multiple access protocols were proposed when the common information among the users is more than $\{0,1,e\}$.
The authors of \cite{shah2011medium, rajagopalan2009network, shah2012randomized} proposed decentralized random access protocols that achieve full throughput when each user knows the maximum queue length in the system or all other users' transmission results.
When channel sensing is allowed, carrier sense multiple access (CSMA) protocols, proposed in \cite{jiang2010distributed, jiang2011approaching,ni2012q, ghaderi2013fundamental,jiang2012fast,lee2014provable}, achieve full throughput when the channel sensing portion of time is not taken into account in the throughput calculation.
A survey of CSMA protocols is presented in \cite{yun2012optimal}.
In terms of delay performance, the CSMA protocols proposed in \cite{jiang2012fast,lee2014provable} achieve delay that is linear in the number of users. 

Multiple access protocols for adversarial queueing models were presented in \cite{chlebus2012adversarial,anantharamu2009adversarial}.
In \cite{chlebus2012adversarial,anantharamu2009adversarial} it is proved that these protocols achieve full throughput and have linear delay in the number of users.

Other models for multiple access have also been proposed in the literature.
In \cite{wang2014optimal}, channel switching policies that achieve high throughput for multiple access have been considered within the context of the slotted Aloha protocol and the IEEE 802.11 WLANs protocol.
The stability region of the multi-packet reception multiple access channel has been investigated in \cite{luo2006throughput}.
Multiple access with noisy channels has been considered in \cite{ying2011throughput, reddy2012distributed}, and the stability region of policies with delayed shared information has been determined.

\subsection*{Contributions of the Paper}
We present a collision-free protocol ($\g$) that achieves full throughput and delay that is linear in the number of users.
The protocol is based on the common information approach to decentralized decision-making \cite{nayyar2013decentralized}.
The common information in our problem is the feedback, $0,1$ or $e$, provided at each time instant to all the users by the collision channel. The protocol achieves lower delay than adaptive TDMA and back-off protocols. It also achieves lower delay than CSMA. CSMA protocols achieve delay that is linear in the number users, but it is significantly higher than that of $\g$. Furthermore, CSMA protocols require more communication and coordination among users than the $\g$ protocol.
The $\g$ protocol is simple to implement, as at each time instant it only requires knowledge of the upper bounds on each user's queue length. The upper bounds on the users' queue lengths are common knowledge and are updated in a simple manner.

\subsection*{Organization}
The rest of the paper is organized as follows. In Section \ref{sec:model} we present the system model and formulate the problem under investigation. In section \ref{sec:CIMA} we present the $\g$ protocol.
In Section \ref{sec:analysis} we prove that the $\g$ protocol achieves full throughput and linear (in the number of users) delay.
We present simulation results and compare the delay of our protocol with the delay of other 
protocols that achieve full throughput in Section \ref{sec:simulation}.
We conclude in Section \ref{sec:conclusion}.
We present the proof of the technical results in Appendices A-E.
\subsection*{Notation}
Random variables are denoted by upper case letters, their realization by the corresponding lower case letter.
In general, subscripts are used as time index while superscripts are used to index users. 
For time indices $t_1\leq t_2$, $X_{t_1:t_2}$ is the short hand notation for $(X_{t_1},X_{t_1+1},...,X_{t_2})$.
For a policy/protocol $g$, we use $X^{g}$ to indicate that the random variable $X^{g}$ depends on the choice of policy $g$.
$\mathbf{P}(\cdot)$ is the probability of an event. For random variables $X,Y$ with realizations $x,y$, $\mathbf{P}(x|y) := \mathbf{P}(X=x|Y=y)$.
For a policy $g$ and a parameter $\lambda$, $\mathbf{P}^{\lambda,g}(\cdot)$ indicates that the probability depends on the policy $g$ and the parameter $\lambda$.

\section{System Model and Objective}
\label{sec:model}
\subsection{System Model}
\label{sub:model}
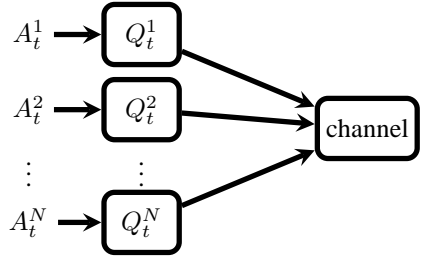
\begin{figure}
\begin{center}
\begin{tikzpicture}[node distance=1.5cm]

\node (Q1) at (1,2.5) [block] {$Q^1_t$};
\node (Q2) at (1,1.5) [block] {$Q^2_t$};
\node (Q3) at (1,0.75) {$\vdots$};
\node (QN) at (1,0) [block] {$Q^N_t$};
\node (CH) at (4,1.25)[block] {channel};
\node (I1) [left of =Q1] {$A^1_t$};
\node (I2) [left of =Q2] {$A^2_t$};
\node (I3) [left of = Q3] {$\vdots$};
\node (IN) [left of =QN] {$A^N_t$};

\draw [arrow]  (I1) -- (Q1);
\draw [arrow]  (I2) -- (Q2);
\draw [arrow]  (IN) -- (QN);

\draw [arrow]  (Q1) -- (CH);
\draw [arrow]  (Q2) -- (CH);
\draw [arrow]  (QN) -- (CH);
\end{tikzpicture}

\end{center}

\caption{Multiple Access Collision Channel}
\label{fig:system}
\end{figure}
We consider a slotted communication system, described by Fig. \ref{fig:system}, 
where $N$ users, indexed by $1,2,\dots,N$, 
share a common collision channel; we denote by $\mathcal{N}:=\{1,2,\dots,N\}$ the set of channel users.
Each user $n$ is associated with an infinite size buffer
with queue length $Q^n_t$ at the beginning of each time slot $t$.
We assume that each queue is initially empty.

At each time slot $t$ each user can transmit one packet in its queue through the shared channel.
If only one user transmits in a time slot, the transmission is successful and the transmitted packet is removed form the queue; if more than one users transmit simultaneously, a collision occurs and all packets involved in the collision remain in their queue.
We consider Bernoulli arrivals to the system.
Let $A^n_t$ denote the packet arrival to user $n$ at time $t$;
$A^n_t=1$ means that a packet arrives at queue $n$ right after the transmission at time $t$.
The arrival $A^n_t$ is a Bernoulli random variable with parameter $\lambda^n$, and the
arrival processes
$\{A^n_t, t=0,1,\dots\}, n \in \mathcal{N}$ are independent.
Let $U^n_t$ denote the transmission decision of user $n$ at time slot $t$; $U^n_t=1$ (resp. $0$) indicates that user $n$ transmits (resp. does not transmit) at time $t$. 
The dynamics of queues are given by
\begin{align}
Q^n_{t+1} = A^n_t+\left(Q^n_t - U^n_t\prod_{m\neq n}\left(1-U^m_t\right) \right)^+,
\label{eq:qdynamics}
\end{align}
where $(\cdot)^+ := \max(\cdot,0)$.
We assume that at the end of each time slot $t$, every user receives a feedback $F_t\in \{0,1,e\}$ from the channel/receiver indicating whether no packets, one packet, or more than one packet (a collision) were transmitted, respectively, in this time slot.
This communication system is decentralized; each user can only observe its own queue length, its arrivals and the common feedback.
Moreover, the arrival rates $\lambda := (\lambda^1,\lambda^2,\dots,\lambda^N)$ are \textit{not known} to the users.
Therefore, the users' decisions according to any decentralized transmission policy/protocol $g=\{g^n_t, n=1,2,\dots,N, t=0,1,\dots\}$ are generated by
\begin{align}
U^n_t = g^n_t(Q^n_{0:t}, A^n_{0:t-1}, U^n_{0:t-1}, F_{0:t-1}),
\label{eq:policy}
\end{align}
$ n=1,2,\dots,N, t=0,1,2,\dots$

In this paper, we consider throughput and queueing delay as the performance metrics of a decentralized transmission policy/protocol.
The objective is to design a decentralized protocol to achieve full throughput and to maintain low queueing delay. We proceed to define the throughput region and queueing delay of the communication system.

\subsection{Stability and Throughput Optimality}\label{sub:model:throughput}

For queueing systems that can be described by irreducible Markov chains,
stability is usually defined to be positive recurrence of the corresponding Markov chains.
In this problem, the users' actions can generally depend on the whole history of information.
When non-Markovian control policies are used, the resulting queue length processes are not Markov in general.
Even within the class of Markovian policies, the corresponding Markov chain may not be irreducible under any Markovian policy.

To achieve higher throughput performance of the communication system, we consider general 
non-Markovian policies of the form given by \eqref{eq:policy}. Therefore, a stability notion for general stochastic processes is essential for our analysis of the system. 
In this paper, we call a stochastic process $\{X_t,t=0,1\dots\}$ \textit{stable} if
for every $\epsilon > 0$ there exists a finite set $K$ such that
\begin{align}
\mathbf{P}(X_t \notin K ) < \epsilon \text{ for all }t.
\label{def:stability}
\end{align}
This stability concept is also used in \cite{szpankowski1994stability,luo1999stability, borst2008stability}, and it is called bounded in probability in \cite{meyn2009markov}.
Note that the stability criterion \eqref{def:stability} is equivalent to positive recurrence for countable irreducible Markov chains \cite[Proposition 18.3.1]{meyn2009markov}. 
For general countable Markov chains with a reachable state, bounded in probability is equivalent to positive Harris recurrence, another stability concept for general Markov chains \cite[Proposition 18.3.2 ]{meyn2009markov}.


Given the arrival rates $\lambda = (\lambda^1,\dots,\lambda^N)$ to all queues, 
a policy/protocol $g$ stabilizes the communication system if the resulting queue length process
$\{Q^{n,g}_t, t=0,1,\dots\}$ for every user $n=1,\dots,N$ is stable.
The arrival rate $\lambda$ is said to be supportable if there exist policies/protocols that can stabilize the communication system under $\lambda$.

For any arrival rates $\lambda = (\lambda^1,\dots,\lambda^N)$, we use $\lambda^{tot}:=\sum_{n=1}^N \lambda^n$ to denote the total arrival rate to the communication system.
Since at most one packet can be transmitted through the collision channel at each time, only $\lambda \in \Lambda$ could be supportable, where
\begin{align}
\Lambda = \left\{\lambda = (\lambda^1,\lambda^2,\dots,\lambda^N): \lambda^{tot} < 1  \right\}.
\label{eq:throughputregion}
\end{align}

Furthermore, any $\lambda \in \Lambda$ is supportable by the time sharing policy that assigns $\lambda^n$ portion of time slots to user $n$. 
Therefore, arrival rates $\lambda$ are supportable if and only if $\lambda \in \Lambda$. We call $\Lambda$ the throughput region of the multiple access communication system.
We call a decentralized policy/protocol throughput optimal if it can stabilize the communication system for any $\lambda \in \Lambda$. 

\subsection{Queueing Delay} \label{sub:model:delay}
Let $Q^{tot}_t:= \sum_{n=1}^N Q^n_t$ denote total queue length of the system at time $t, t=1,2,\dots$.
We define
\begin{align}
Q_{avg}
:= \limsup_{t\rightarrow\infty} \frac{1}{T} \mathbf{E}\left[\sum_{t=0}^{T-1}Q^{tot}_t\right].
\end{align}
From Little's law (see \cite{bertsekas1992data}), 
in a stable queueing system, the queueing delay of a packet is proportional to the average total number of packets in the system. 
For a throughput optimal protocol $g$, the queueing delay of the system is given by $\frac{Q^g_{\text{avg}}}{\lambda^{tot}} $.

\subsection{Objective}
Our objective is to find a throughput optimal protocol that results in low queueing delay.

\section{The Common Information-Based Multiple Access (CIMA) Protocol}
\label{sec:CIMA}
\subsection{Preliminaries}
We first introduce common upper bounds for the queues.
Let $B^g_t := (B^{1,g}_t,B^{2,g}_t,\dots,B^{N,g}_t)$, where
$B^{n,g}_t$ is the upper bound on $Q^n_t$ at time slot $t$ based on the transmission protocol $g$ and the common information $F_{0:t-1}$, received from the common feedback, up to time slot $t$.
That is, when $F_{0:t-1}=f_{0:t-1}$,
\begin{align*}
b^{n,g}_t = 
&\max \{q^n_t: \exists\lambda\in \Lambda \text{ s.t. } \mathbf{P}^{\lambda,g}(q^n_t|f_{0:t-1}) >0\}.
\end{align*}
Note that, $B^g_t$ is a function of the common information $F_{0:t-1}$. We use $B^{g}_t$ to denote that the common upper bounds depend explicitly on the transmission policy $g$.

%
\subsection{The $\g$ Protocol}
\label{sub:ghat}
The $\g$ protocol is defined as follows.
\begin{align}
U^n_t = & \g^n_t(Q^n_{0:t}, A^n_{0:t-1}, U^n_{0:t-1}, F_{0:t-1}) \nonumber\\
      = & \left\{ \begin{array}{ll}
      1 & \text{ if } v(B^{\g}_t) = n \text{ and }Q^n_t >0 ,\\ 
      0 & \text{ otherwise,}
      \end{array}\right.
\label{eq:gmac}
\end{align}
where $v(\cdot)$ is a function of common upper bounds $B^{\g}_t$ defined as
\begin{align*}
v(b^{\g}_t) = \min\{n: b^{n,\g}_t = \max_{m=1,2,\dots,N}b^{m,\g}_t\}.
\end{align*}
Note that $v(b^{\g}_t)$ is the user with the largest common upper bound. Since we want to avoid collision, if there are more than one users with the largest common upper bound, $\g$ selects the user with the smallest index.


\section{Performance Analysis of the $\g$ Protocol}
\label{sec:analysis}
We prove that the $\g$ protocol is throughput optimal in \ref{sub:throughput}.
We provide an upper bound on the queueing delay under the $\g$ protocol in \ref{sub:delay}.

\subsection{Preliminary Results}
In order to analyze the system dynamics under the $\g$ protocol, we first provide the following result.
\begin{lemma}
\label{lm:conditional_indep}
Under the $\g$ protocol,
the queue lengths are independent conditional on the common feedback given any arrival rates $\lambda$.
Specifically, for any time $t$, any realization $f_{0:t-1}$ and any value $q_t = (q^1_t,\dots,q^N_t)$ of $Q_t = (Q^1_t,\dots,Q^N_t)$,
\begin{align}
\mathbf{P}^{\lambda,\g}(q_t|f_{0:t-1})
= & \prod_{n=1}^N \mathbf{P}^{\lambda,\g}(q^n_t|f_{0:t-1}).
\label{eq:conditional_indep}
\end{align}
Moreover, the conditional probability can be updated as follows.
For $n\neq v(b^{\g}_t)$
\begin{align}
&\mathbf{P}^{\lambda,\g}(q^n_{t+1}|f_{0:t})
\nonumber\\
=&\lambda^n\mathbf{P}^{\lambda,\g}(Q^n_t = q^n_{t+1}-1|f_{0:t-1})\nonumber\\
&+(1-\lambda^n)\mathbf{P}^{\lambda,\g}(Q^n_t = q^n_{t+1}|f_{0:t-1}).
\label{eq:conditional_update1}
\end{align}
For $n =  v(b^{\g}_t)$ and $f_t = 1$
\begin{align}
&\mathbf{P}^{\lambda,\g}(q^n_{t+1}|f_{0:t})
\nonumber\\
=&\lambda^n
\frac{\mathbf{P}^{\lambda,\g}(Q^n_t=q^n_{t+1}|f_{0:t-1})1_{\{q^n_{t+1}>0\}}}{\mathbf{P}^{\lambda,\g}(Q^n_t >0|f_{0:t-1})}
\nonumber\\
&+(1-\lambda^n)
\frac{\mathbf{P}^{\lambda,\g}(Q^n_t=q^n_{t+1}+1|f_{0:t-1})}{\mathbf{P}^{\lambda,\g}(Q^n_t >0|f_{0:t-1})}.
\label{eq:conditional_update2}
\end{align}
For $n =  v(b^{\g}_t)$ and $f_t = 0$
\begin{align}
\mathbf{P}^{\lambda,\g}(q^n_{t+1}|f_{0:t})
=& \left\{ \begin{array}{ll}
0 & \text{ if }q^n_{t+1} \geq 2, \\
\lambda^n & \text{ if }q^n_{t+1} =1, \\
1-\lambda^n & \text{ if }q^n_{t+1} =0.
\end{array}\right.
\label{eq:conditional_update3}
\end{align}

\end{lemma}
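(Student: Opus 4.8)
The plan is to induct on $t$, exploiting the fact that under $\g$ the transmission decisions and the channel feedback are extremely simple functions of the current queue lengths and the past feedback. Fix a realization $f_{0:t-1}$ of positive probability and set $n^\star := v(b^{\g}_t)$; since $b^{\g}_t$ is a deterministic function of $f_{0:t-1}$, so is $n^\star$. By \eqref{eq:gmac}, user $n^\star$ transmits iff $Q^{n^\star}_t>0$ while every other user stays silent, so at most one user ever transmits: hence $F_t\neq e$ and in fact $F_t = 1_{\{Q^{n^\star}_t>0\}}$, a deterministic function of $Q^{n^\star}_t$. Correspondingly the dynamics \eqref{eq:qdynamics} collapse, conditional on $f_{0:t-1}$, to $Q^n_{t+1}=A^n_t+Q^n_t$ for $n\neq n^\star$ and $Q^{n^\star}_{t+1}=A^{n^\star}_t+(Q^{n^\star}_t-1_{\{Q^{n^\star}_t>0\}})^+$. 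Finally, since every quantity in $(Q_{0:t},U_{0:t},F_{0:t-1})$ is a deterministic function of $A_{0:t-1}$ and the (deterministic) protocol, the vector $A_t$ is independent of $(Q_{0:t},F_{0:t-1})$ with independent Bernoulli$(\lambda^n)$ components.

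For the induction, the base case $t=0$ is immediate because all queues start empty, so $Q_0=0$ surely and \eqref{eq:conditional_indep} holds trivially. Assuming \eqref{eq:conditional_indep} at time $t$, I would write $\mathbf{P}^{\lambda,\g}(q_{t+1},f_t\mid f_{0:t-1})=\sum_{q_t}\mathbf{P}^{\lambda,\g}(q_{t+1},f_t\mid q_t,f_{0:t-1})\,\mathbf{P}^{\lambda,\g}(q_t\mid f_{0:t-1})$; by the structural facts above, the first factor equals $1_{\{f_t=1_{\{q^{n^\star}_t>0\}}\}}\prod_{n=1}^{N}\mathbf{P}(A^n_t=q^n_{t+1}-\phi^n(q_t))$, where $\phi^n(q_t)=q^n_t$ for $n\neq n^\star$ and $\phi^{n^\star}(q_t)=(q^{n^\star}_t-1_{\{q^{n^\star}_t>0\}})^+$. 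Plugging the inductive hypothesis into the second factor and noting that the indicator depends only on the $n^\star$-coordinate, the sum over $q_t$ factors into a product of per-user sums, one for each $q^n_{t+1}$. Dividing by $\mathbf{P}^{\lambda,\g}(f_t\mid f_{0:t-1})$ — which is positive for every $f_{0:t}$ on which we condition, and which reduces to $\mathbf{P}^{\lambda,\g}(1_{\{Q^{n^\star}_t>0\}}=f_t\mid f_{0:t-1})$ because each non-$n^\star$ factor sums to one — yields the product form \eqref{eq:conditional_indep} at time $t+1$. Reading off the individual factors and using $A^n_t\in\{0,1\}$ gives \eqref{eq:conditional_update1} for $n\neq n^\star$; for $n=n^\star$ the surviving indicator $1_{\{Q^{n^\star}_t>0\}}$ together with the normalization produces \eqref{eq:conditional_update2} when $f_t=1$, and when $f_t=0$ only the term $q^{n^\star}_t=0$ survives, so $\phi^{n^\star}=0$, $Q^{n^\star}_{t+1}=A^{n^\star}_t$, and \eqref{eq:conditional_update3} follows.

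I expect the main obstacle to be careful bookkeeping rather than any substantive difficulty: one must keep straight that $n^\star=v(b^{\g}_t)$ is fixed once $f_{0:t-1}$ is fixed and does not depend on $f_t$; that $F_t$ is itself a function of $Q^{n^\star}_t$, which is precisely the collision-free-without-sensing feature of $\g$ and is the reason the update is non-trivial only for the $n^\star$-coordinate; and that all conditioning is on positive-probability events, so every quotient above is well defined. With those points in place the three update identities follow from a routine case analysis on the value of $q^{n^\star}_t$ relative to $0$. (The same argument applies verbatim to any protocol of the form ``the user $h(f_{0:t-1})$ transmits iff its queue is nonempty'' for a deterministic rule $h$; $\g$ is the case $h(f_{0:t-1})=v(b^{\g}_t)$.)
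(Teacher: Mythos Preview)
Your proposal is correct and follows essentially the same route as the paper's proof: both argue by induction on $t$, expand $\mathbf{P}^{\lambda,\g}(q_{t+1},f_t\mid f_{0:t-1})$ by summing over $q_t$, use that under $\g$ the feedback satisfies $F_t=1_{\{Q^{n^\star}_t>0\}}$ so the indicator touches only the $n^\star$-coordinate, apply the inductive hypothesis to factor the sum, and then normalize to read off the three update formulas. The only differences are cosmetic (your $\phi^n$ denotes the post-transmission queue while the paper's $\phi^n$ denotes the resulting marginal PMF, and you phrase the normalization as dividing by $\mathbf{P}^{\lambda,\g}(f_t\mid f_{0:t-1})$ rather than writing the Bayes ratio up front), and your closing remark about general rules $h(f_{0:t-1})$ is a nice bonus observation not in the paper.
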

\begin{proof}
See Appendix \ref{app:lm1}.
\end{proof}

Using Lemma \ref{lm:conditional_indep}, we can obtain the evolution of queue lengths and common upper bounds under $\g$, as stated in the lemma below.
\begin{lemma}
\label{lm:evolution}
Under $\g$, the queue lengths evolve as
\begin{align}
Q^{n,\g}_{t+1}
 = & \left\{ \begin{array}{ll}
      A^n_t+Q^{n,\g}_{t} & \text{ if } n \neq v(B^{\g}_t) ,\\ 
      A^n_t+\left(Q^{n,\g}_{t} - 1\right)^+ & \text{ if } n = v(B^{\g}_t).
      \end{array}\right.
\label{eq:Qunderg}
\end{align}
and the common upper bounds evolve according to
\begin{align}
&B^{n,\g}_{t+1}\nonumber\\
 = & \left\{ \begin{array}{ll}
      B^{n,\g}_{t}+1 & \text{ if } n \neq v(B^{\g}_t), \\ 
      B^{n,\g}_{t}   & \text{ if } n  = v(B^{\g}_t) \text{ and } F_t=1, \\
      1   & \text{ if } n = v(B^{\g}_t) \text{ and } F_t =0 .
      \end{array}\right.
\label{eq:Bunderg}
\end{align}
\end{lemma}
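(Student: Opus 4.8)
The plan is to establish the two displayed recursions of Lemma~\ref{lm:evolution} separately: \eqref{eq:Qunderg} comes directly from the definition of the protocol, and \eqref{eq:Bunderg} comes from the conditional-probability update formulas of Lemma~\ref{lm:conditional_indep}.

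For the queue evolution, the key observation is that under $\g$ at most one user ever transmits in any slot --- the user $v(B^{\g}_t)$ --- so no collision can occur and $F_t\in\{0,1\}$ for every $t$. Indeed, \eqref{eq:gmac} forces $U^n_t=0$ for every $n\neq v(B^{\g}_t)$ and sets $U^{v(B^{\g}_t)}_t=1_{\{Q^{v(B^{\g}_t)}_t>0\}}$. Substituting $U^n_t=0$ into \eqref{eq:qdynamics} gives $Q^{n,\g}_{t+1}=A^n_t+Q^{n,\g}_t$ for $n\neq v(B^{\g}_t)$; for $n=v(B^{\g}_t)$, since no other user transmits, $U^n_t\prod_{m\neq n}(1-U^m_t)=U^n_t=1_{\{Q^n_t>0\}}$, so \eqref{eq:qdynamics} yields $Q^{n,\g}_{t+1}=A^n_t+(Q^{n,\g}_t-1)^+$ (valid both when $Q^n_t>0$ and, trivially, when $Q^n_t=0$). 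This is \eqref{eq:Qunderg}.

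For the upper-bound evolution I would argue by induction on $t$, carrying the stronger claim that for every feedback realization $f_{0:t-1}$ of positive probability the support of $\mathbf{P}^{\lambda,\g}(Q^n_t=\cdot\,|\,f_{0:t-1})$ is contained in $\{0,1,\dots,b^{n,\g}_t\}$ for \emph{every} $\lambda\in\Lambda$, and equals $\{0,1,\dots,b^{n,\g}_t\}$ for every $\lambda\in\Lambda$ whose coordinates all lie in $(0,1)$ (such $\lambda$ exist, e.g.\ $\lambda^m\equiv1/(2N)$); together with the definition of $b^{n,\g}_t$ as a maximum over $\lambda\in\Lambda$, these two facts pin down $b^{n,\g}_t$ as the right endpoint of this interval, after which \eqref{eq:Bunderg} is read off the three cases of Lemma~\ref{lm:conditional_indep}. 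The base case $t=0$ holds since all queues start empty, so the support is $\{0\}$ and $b^{n,\g}_0=0$. For the inductive step, fix $f_{0:t-1}$; since $B^{\g}_t$ is a function of $f_{0:t-1}$, so is $v:=v(B^{\g}_t)$, and $f_t\in\{0,1\}$ by the first part. If $n\neq v$, then \eqref{eq:conditional_update1} shows that $\mathbf{P}^{\lambda,\g}(q\,|\,f_{0:t})>0$ forces $q-1$ or $q$ into the support of $Q^n_t$, hence $q\le b^{n,\g}_t+1$, with equality attained when $\lambda^n\in(0,1)$ because then both terms of \eqref{eq:conditional_update1} are active; this matches $B^{n,\g}_{t+1}=B^{n,\g}_t+1$. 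If $n=v$ I split on $f_t$: the value $f_t=1$ has positive probability exactly when $\mathbf{P}^{\lambda,\g}(Q^n_t>0\,|\,f_{0:t-1})>0$, i.e.\ when $b^{n,\g}_t\ge1$, and then \eqref{eq:conditional_update2} gives support $\{0,\dots,b^{n,\g}_t\}$, matching $B^{n,\g}_{t+1}=B^{n,\g}_t$; the value $f_t=0$ has positive probability under the full-support $\lambda$ (since $0$ always lies in the support of $Q^n_t$), and \eqref{eq:conditional_update3} gives support $\{0,1\}$, matching $B^{n,\g}_{t+1}=1$.

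I expect the main obstacle to be the careful bookkeeping of which feedback histories $f_{0:t-1}$ --- and which one-step extensions $f_t$ --- actually have positive probability, in particular the degenerate configurations with $b^{n,\g}_t=0$, together with the monotonicity-type argument behind the containment claim: one must verify that no $\lambda\in\Lambda$ can enlarge the support of $Q^n_t$ beyond the full-support case, so that the maximum defining $b^{n,\g}_t$ is genuinely attained and coincides with the interval endpoint. Once that is settled, the three-way case analysis above is just substitution into the formulas already furnished by Lemma~\ref{lm:conditional_indep}.
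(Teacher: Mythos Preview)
Your proposal is correct and follows the same route as the paper: derive \eqref{eq:Qunderg} directly from \eqref{eq:qdynamics} and \eqref{eq:gmac}, and read off \eqref{eq:Bunderg} from the three update formulas \eqref{eq:conditional_update1}--\eqref{eq:conditional_update3} of Lemma~\ref{lm:conditional_indep}. The paper's own proof is in fact much terser---it simply cites those three equations case by case---whereas you supply the support-tracking induction and the positive-probability bookkeeping that the paper leaves implicit; your anticipated ``main obstacle'' is thus real but is exactly the detail the paper glosses over.
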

\begin{proof}
See Appendix \ref{app:lm2}.
\end{proof}
Using Lemma \ref{lm:evolution},
the $\g$ protocol can be easily implemented as described in Algorithm \ref{alg:g} below.
\begin{algorithm}[H]
\caption{The $\g$ protocol for user $n \in\{1,2,\dots,N\}$}
\label{alg:g}
\begin{algorithmic}
\FOR{$k=1$ to $N$}
\STATE $B^k \leftarrow 0$
\ENDFOR

\WHILE{user $n$ is active}
\STATE $B^{\text{MAX}} \leftarrow \max_k(B^k)$
\STATE $v \leftarrow \min(k: B^k = B^{\text{MAX}})$
\IF{$n = v$ and $Q^n_t >0$}
\STATE transmit a packet (set $U_t = 1$)
\ENDIF

\FOR{$k \neq v$}
\STATE $B^k \leftarrow B^k+1$
\ENDFOR

\IF{$F_t = 1$}
\STATE  $B^v \leftarrow B^v$
\ELSE
\STATE  $B^v \leftarrow 1$
\ENDIF
\ENDWHILE
\end{algorithmic}
\end{algorithm}

\subsection{Throughput Optimality} \label{sub:throughput}
The main result on $\g$'s throughput is stated in the following theorem.
\begin{theorem}
\label{thm:throughput_optimal}
The $\g$ protocol is throughput optimal.
That is, for any arrival rates $\lambda \in \Lambda$ (defined by \eqref{eq:throughputregion}),
the queue length processes under $\g$ are stable.
\end{theorem}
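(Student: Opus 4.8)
\emph{Proof proposal.} Fix an arbitrary $\lambda\in\Lambda$, so that $\lambda^{tot}<1$. The plan is a Foster--Lyapunov drift argument for the pair process $(Q^{\g}_t,B^{\g}_t)$.

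First I would note that, under $\g$, $(Q^{\g}_t,B^{\g}_t)$ is a time-homogeneous Markov chain: by Lemma~\ref{lm:evolution} this pair is driven only by the i.i.d.\ arrivals $A_t$ and by the feedback $F_t$, and since $\g$ is collision-free and only user $v(B^{\g}_t)$ ever transmits, $F_t=1$ exactly when $Q^{v(B^{\g}_t)}_t>0$, which is a deterministic function of $(Q^{\g}_t,B^{\g}_t)$. A short induction on $t$ from $Q^{n,\g}_0=B^{n,\g}_0=0$, using \eqref{eq:Qunderg}--\eqref{eq:Bunderg} and $A^n_t\le 1$, shows that $Q^{n,\g}_t\le B^{n,\g}_t$ for every $n$ and $t$; hence the chain stays in $\mathcal{S}:=\{(q,b)\in\mathbb{Z}_{\ge 0}^{2N}:q^n\le b^n\ \forall n\}$, and it suffices to prove that $(Q^{\g}_t,B^{\g}_t)$ is bounded in probability, since this forces each $Q^{n,\g}_t$ to be stable, i.e.\ $\g$ to be throughput optimal.

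For the drift argument I would use the linear test function $V(q,b):=\sum_{n=1}^N b^n+\beta\sum_{n=1}^N q^n$, with $\beta$ chosen large enough that $(N-1)-\beta(1-\lambda^{tot})\le -1$ (possible since $\lambda^{tot}<1$); note $V\ge 0$ and the sublevel sets of $V$ within $\mathcal{S}$ are finite. Writing $v=v(b)$ and $b^{\max}=\max_n b^n=b^v$, the one-step drift of $V$ from a state $(q,b)\in\mathcal{S}$ splits into two cases. If $q^v>0$ (so $F_t=1$): by \eqref{eq:Bunderg} the $N-1$ coordinates $B^n$, $n\neq v$, each increase by one while $B^v$ is unchanged, and by \eqref{eq:Qunderg} exactly one packet leaves queue $v$, so the expected drift equals $(N-1)+\beta(\lambda^{tot}-1)\le -1$, independently of $(q,b)$. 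If $q^v=0$ (so $F_t=0$): the $N-1$ coordinates $n\neq v$ still each increase by one, but $B^v$ drops from $b^{\max}$ all the way to $1$, and no packet departs, so the expected drift equals $(N-b^{\max})+\beta\lambda^{tot}$. Setting $M:=\lceil N+\beta\lambda^{tot}+1\rceil$, this second expression is $\le -1$ whenever $b^{\max}\ge M$ and is always $\le N+\beta\lambda^{tot}=:b_0<\infty$. Hence the drift of $V$ is $\le -1$ at every state of $\mathcal{S}$ outside $C:=\{(q,b)\in\mathcal{S}:b^{\max}\le M-1\}$, which is \emph{finite} because on $\mathcal{S}$ one has $q^n\le b^n\le M-1$, while on $C$ the drift is at most $b_0$. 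This is a Foster--Lyapunov drift condition with finite exceptional set $C$; by the stability theory recalled in Section~\ref{sub:model:throughput} (see \cite{meyn2009markov}) it implies that the chain $(Q^{\g}_t,B^{\g}_t)$, started from $(0,0)\in C$, is bounded in probability, and the theorem follows.

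I expect the crux to be the choice of test function in the drift step. The most natural candidate, $\sum_n b^n$, has strictly positive drift on every slot with $F_t=1$ --- precisely the slots on which the protocol does useful work --- so it cannot work on its own; one is forced to add the term $\beta\sum_n q^n$ and then argue that the only region where the combined drift can fail to be negative, namely $\{F_t=0,\ b^{\max}<M\}$, is pinned to a finite set by the inequality $Q^{n,\g}_t\le B^{n,\g}_t$. The other ingredient that makes this clean case analysis possible is the exact form of the $B$-update in Lemma~\ref{lm:evolution} --- in particular that $B^v$ collapses to $1$, rather than merely decreasing by one, when the served user is empty --- which is itself a consequence of Lemma~\ref{lm:conditional_indep}.
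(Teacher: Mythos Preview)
Your proof is correct and follows essentially the same approach as the paper: a Foster--Lyapunov drift argument on the Markov chain $(Q^{\g}_t,B^{\g}_t)$, with a linear test function combining $\sum_n q^n$ and $\sum_n b^n$, a case split on whether $q^{v}=0$, and the observation that $q^n\le b^n$ pins the bad set to a finite box. The only cosmetic difference is the normalization of the Lyapunov function---the paper writes $h(y)=\sum_n q^n+\alpha\sum_n b^n$ with $\alpha=\epsilon/(2(N-1))$ small, whereas you write $V=\sum_n b^n+\beta\sum_n q^n$ with $\beta$ large, which is the same function up to scaling.
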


To prove the theorem, we first show that under the $\g$ protocol the queue lengths together with the upper bounds form a Markov chain.
\begin{lemma}
\label{lm:MC}
Let $Y^{\g}_t 
:= (Q^{\g}_t, B^{\g}_t)$, where
\begin{align*}
Q^{\g}_t = &(Q^{1,\g}_t,Q^{2,\g}_t,\dots,Q^{N,\g}_t)
\end{align*}
for every time slot $t=0,1,\dots$
Then, $\{Y^{\g}_t, t=0,1,\dots\}$ is a Markov chain.
\end{lemma}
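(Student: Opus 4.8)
The plan is to use Lemma \ref{lm:evolution} to exhibit $Y^{\g}_{t+1}$ as a fixed, time-independent deterministic function of the pair $(Y^{\g}_t, A_t)$, where $A_t := (A^1_t,\dots,A^N_t)$ is the arrival vector at slot $t$; the (time-homogeneous) Markov property then follows because $A_t$ is independent of the history $Y^{\g}_{0:t}$.

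First I would record two elementary facts. \emph{(i)} Since each queue is initially empty and, by its definition, $B^{n,\g}_0 = 0$, the initial state $Y^{\g}_0$ is deterministic; more generally, since $\g$ is a deterministic protocol (see \eqref{eq:gmac}), a straightforward induction on $t$ using the queue dynamics \eqref{eq:qdynamics} shows that every time-$t$ system variable --- $Q^{n,\g}_t$, $B^{n,\g}_t$, $U^n_t$, $F_t$ --- is a deterministic function of the arrivals $A_{0:t-1}$ strictly before time $t$. In particular $Y^{\g}_{0:t}$ is a function of $A_{0:t-1}$, so $A_t$ is independent of $Y^{\g}_{0:t}$. \emph{(ii)} Under $\g$ at most one user transmits in any slot, namely $v(B^{\g}_t)$, so a collision never occurs; hence $F_t \in \{0,1\}$ and $F_t = 1_{\{Q^{v(B^{\g}_t),\g}_t > 0\}}$, which is a deterministic function of $Y^{\g}_t$ alone.

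Next I would assemble the one-step transition map. Given $Y^{\g}_t = (q,b)$, the index $v(b)$ is determined, and by fact \emph{(ii)} so is $F_t$; substituting into \eqref{eq:Qunderg} and \eqref{eq:Bunderg} of Lemma \ref{lm:evolution} expresses $(Q^{\g}_{t+1}, B^{\g}_{t+1})$ as $\phi(q,b,A_t)$ for a map $\phi$ that does not depend on $t$. Consequently, for every $t$ and every realization $y_{0:t}$ having positive probability,
\begin{align*}
\mathbf{P}(Y^{\g}_{t+1} = y' \mid Y^{\g}_{0:t} = y_{0:t})
&= \mathbf{P}\big(\phi(y_t, A_t) = y' \mid Y^{\g}_{0:t} = y_{0:t}\big)\\
&= \mathbf{P}\big(\phi(y_t, A_t) = y'\big),
\end{align*}
where the last equality uses the independence from fact \emph{(i)}. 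The right-hand side depends on $y_{0:t}$ only through $y_t$ and not on $t$, which is precisely the assertion.

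The argument is mostly bookkeeping once Lemma \ref{lm:evolution} is available; the only steps needing a little care are fact \emph{(i)} --- checking that the time-$t$ state is a function of arrivals strictly before $t$, so that the fresh arrival $A_t$ is genuinely independent of the whole past --- and fact \emph{(ii)}, the observation that the feedback $F_t$ is itself recoverable from $Y^{\g}_t$, so that no hidden randomness or extra memory enters the one-step transition.
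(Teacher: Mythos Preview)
Your proposal is correct and follows essentially the same route as the paper's proof: both observe that $F_t = 1_{\{Q^{v(B^{\g}_t),\g}_t>0\}}$ is determined by $Y^{\g}_t$, combine this with Lemma \ref{lm:evolution} to write $Y^{\g}_{t+1}$ as a deterministic function of $(Y^{\g}_t, A_t)$, and then invoke the independence of $A_t$ from the past. Your treatment is slightly more explicit in justifying that independence (your fact \emph{(i)}) and in noting time-homogeneity, but the argument is the same.
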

\begin{proof}
See Appendix \ref{app:lm3}.
\end{proof}

Since $\{Y^{\g}_t, t=0,1,\dots\}$ is a Markov chain, we can use the Foster-Lyapunov theorem in the proof below to show that the process $\{Y^{\g}_t, t=0,1,\dots\}$ is stable.

\begin{proof}[Proof of Theorem \ref{thm:throughput_optimal}]
Let $\epsilon = 1 - \lambda^{tot}$. Then $\epsilon>0$ because $\lambda \in \Lambda$.
Let $y:=(q,b)=(q^1,q^2,\dots,q^N,b^1,b^2,\dots,b^N)$.
Define the Lyapunov function $h(y)$ by
\begin{align}
h(y) = \sum_{n=1}^N (q^n + \alpha b^n),
\label{eq:lyafun}
\end{align}
where $\alpha = \frac{\epsilon}{2(N-1)}$.
For $Y^{\g}_t = y$,
let $v = v(b)=\min(n: b^n=\max_{k\in \mathcal{N}}(b^{k}))$.
Then from \eqref{eq:Qunderg} and \eqref{eq:Bunderg} in Lemma \ref{lm:evolution} we get
\begin{align}
 & \mathbf{E}\left[h(Y^{\g}_{t+1}) - h(Y^{\g}_t) | Y^{\g}_t=y\right] \nonumber\\
\leq & -\epsilon/2 \quad \text{   if } b^v \geq \frac{1}{\alpha}+1.
\label{eq:stability1}
\end{align}
(see Appendix \ref{app:thm} for a detailed derivation of \eqref{eq:stability1})
\\
Since $b^v=\max_{k \in \mathcal{N}}(b^{k})$, $b^v \geq b^n$ and $b^v\geq q^n$ for all $n=1,2,\dots,N$.
Define 
\begin{align*}
C = \{y=(q,b): q^n<\frac{1}{\alpha}+1, b^n<\frac{1}{\alpha}+1 \quad\forall n\}.
\end{align*}
Then, \eqref{eq:stability1} holds for every $y \notin C$.
Since $C$ is a finite set, the Foster-Lyapunov drift criterion (Condition (DD2) in \cite{meyn1992stability}) is satisfied.
From \cite[Theorem 4.5]{meyn1992stability}, $\{Y^{\g}_t, t=0,1,\dots\}$ is bounded in probability (satisfies the stability condition \eqref{def:stability}).

Therefore, for every $\epsilon>0$ there exists a finite set $K$ such that
\begin{align}
\mathbf{P}(Y^{\g}_t \notin K ) < \epsilon \text{ for all }t.
\label{eq:bddinprob}
\end{align}
Let $K^n = \{q^n: \text{there exists }y = (q,b) \in K\}$ be the projection of $K$ on its $n$th component.
Then,
\begin{align}
\mathbf{P}(Q^{n,\g}_t \notin K^n )
\leq 
& \mathbf{P}(Y^{\g}_t \notin K ) < \epsilon \text{ for all }t.
\label{eq:bddinprob2}
\end{align}
Therefore,
$\{Q^{n,\g}_t,t=0,1,\dots\}$
also satisfies \eqref{def:stability} and the stability of the communication system under $\g$ is established.


\end{proof}
\begin{remark}
We provide an alternative proof of Theorem \ref{thm:throughput_optimal}.
\\
As a result of \eqref{eq:stability1}, condition (V2) in \cite[Chap. 11]{meyn2009markov} is satisfied.
Therefore, by Theorem 11.3.4 in \cite{meyn2009markov} the Markov chain $\{Y^{\g}_t, t=0,1,\dots\}$ is positive Harris recurrent on a countable state space. By Theorem 18.3.2 in \cite{meyn2009markov} positive Harris recurrence implies \eqref{eq:bddinprob}, which in turn implies \eqref{eq:bddinprob2}, and this establishes the assertion of Theorem \ref{thm:throughput_optimal}.

\end{remark}
\subsection{Delay Performance} \label{sub:delay}

Using $\g$, we have the following queueing delay performance guarantee.
\begin{theorem}
\label{thm:delay}
Under the $\g$ protocol, for any rate $\lambda \in \Lambda$ we have
\begin{align}
\frac{Q^{\g}_{avg}}{\lambda^{tot}} \leq \frac{2 N}{1-\lambda^{tot}}.
\label{eq:delaybound}
\end{align}
\end{theorem}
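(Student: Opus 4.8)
The plan is to combine the Lyapunov drift inequality \eqref{eq:stability1} with the Foster–Lyapunov bound on the stationary expectation of the Lyapunov function, and then to extract a bound on $Q^{\g}_{avg}$ from it. By Lemma \ref{lm:MC} the process $\{Y^{\g}_t\}$ is a Markov chain, and by the proof of Theorem \ref{thm:throughput_optimal} it is positive Harris recurrent, so it has a unique stationary distribution $\pi$. The key quantitative statement I would use is the standard consequence of the drift condition (e.g. the comparison/moment bound that accompanies the Foster–Lyapunov criterion, as in \cite[Theorem 14.3.7]{meyn2009markov}): if outside a finite set $C$ the one-step drift of $h$ is at most $-\beta$ for some $\beta>0$, and inside $C$ the drift is bounded by some constant $b$, then $\mathbf{E}_\pi[\,\text{negative drift magnitude}\,]$ is controlled, and more usefully one gets a bound of the form $\beta\,\mathbf{E}_\pi[\mathbf 1_{C^c}] \le b\,\mathbf{E}_\pi[\mathbf 1_{C}] + (\text{boundary terms})$. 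However, to bound $Q^{\g}_{avg}$ directly it is cleaner to telescope the drift inequality over a finite horizon.

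First I would write, for the Lyapunov function $h(y)=\sum_n (q^n+\alpha b^n)$ with $\alpha=\frac{\epsilon}{2(N-1)}$ and $\epsilon=1-\lambda^{tot}$,
\begin{align}
\mathbf{E}[h(Y^{\g}_{t+1})-h(Y^{\g}_t)\mid Y^{\g}_t=y] \le -\frac{\epsilon}{2} + K_0\,\mathbf 1_{\{y\in C\}},
\label{eq:propdrift}
\end{align}
where $C=\{y:q^n<\tfrac1\alpha+1,\ b^n<\tfrac1\alpha+1\ \forall n\}$ and $K_0$ is a finite constant obtained from the worst-case one-step change of $h$ on $C$ (on $C$ every queue and bound is below $\tfrac1\alpha+1$, and $h$ changes by at most $O(N/\alpha)$, giving an explicit $K_0$). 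Then I would take total expectation, sum \eqref{eq:propdrift} from $t=0$ to $T-1$, and telescope the left side to $\mathbf{E}[h(Y^{\g}_T)]-\mathbf{E}[h(Y^{\g}_0)] = \mathbf{E}[h(Y^{\g}_T)] \ge 0$, obtaining
\begin{align}
0 \le \mathbf{E}[h(Y^{\g}_T)] \le h(Y^{\g}_0) - \frac{\epsilon}{2}T + K_0\sum_{t=0}^{T-1}\mathbf{P}(Y^{\g}_t\in C).
\label{eq:proptel}
\end{align}
Rearranging \eqref{eq:proptel} gives $\frac1T\sum_{t=0}^{T-1}\mathbf{P}(Y^{\g}_t\in C)\ge \frac{\epsilon}{2K_0} - \frac{h(Y^{\g}_0)}{K_0 T}$, i.e. the chain spends a positive fraction of time in $C$. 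The step I actually need, though, is a bound on the time-average of $h$ itself, so instead I would use a second Lyapunov-type inequality: since $h(y)\ge \sum_n q^n = q^{tot}$ and $h$ has drift bounded above by $-\epsilon/2$ outside $C$, the standard moment bound for Foster–Lyapunov (applying the drift inequality to $h$ and solving, as in the derivation of \cite[Theorem 14.3.7]{meyn2009markov}) yields $\limsup_{T}\frac1T\sum_{t=0}^{T-1}\mathbf{E}[\,\tfrac{\epsilon}{2}\mathbf 1_{\{Y^{\g}_t\notin C\}}\,] \le \frac1T\big(h(Y^{\g}_0)+K_0\sum_t\mathbf{P}(Y^{\g}_t\in C)\big)$, which is not quite what I want either. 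The genuinely right move is: use a \emph{larger} Lyapunov function or apply the drift to $h$ and note $h(Y^{\g}_{t+1})-h(Y^{\g}_t)\le -\tfrac{\epsilon}{2} + K_0\mathbf 1_C$ implies, after telescoping and rearranging, that the long-run fraction of time in $C$ times $K_0$ dominates $\epsilon/2$; then bound $Q^{tot}_t$ on $C$ by its maximum $N(\tfrac1\alpha+1)=N(\tfrac{2(N-1)}{\epsilon}+1)$ and show that the contribution from $C^c$ is controlled by the same telescoping argument applied to $h^2$ or by the fact that the drift is negative there so the excursion lengths have bounded mean. Concretely, I expect the cleanest route is to telescope the inequality $\mathbf{E}[h(Y^{\g}_{t+1})] - \mathbf{E}[h(Y^{\g}_t)] \le -\tfrac{\epsilon}{2}\mathbf{P}(Y^{\g}_t\notin C) + (K_0-\tfrac{\epsilon}{2})\mathbf{P}(Y^{\g}_t\in C)$ and bound $\mathbf{E}[Q^{tot}_t]\le \mathbf{E}[h(Y^{\g}_t)]$, combined with the estimate that on $C$ one has $h \le (1+\alpha)N(\tfrac1\alpha+1)$, to get $Q^{\g}_{avg} = \limsup_T \tfrac1T\sum_t \mathbf{E}[Q^{tot}_t] \le$ something of order $\tfrac{N}{\epsilon}$; then one tunes the constants so the final bound is exactly $\tfrac{2N\lambda^{tot}}{1-\lambda^{tot}}$, which after dividing by $\lambda^{tot}$ gives \eqref{eq:delaybound}.

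The main obstacle I anticipate is getting the \emph{constants} to come out to the clean value $\frac{2N}{1-\lambda^{tot}}$ rather than merely $O\!\big(\tfrac{N}{1-\lambda^{tot}}\big)$ or $O\!\big(\tfrac{N^2}{1-\lambda^{tot}}\big)$ — in particular the factor $\alpha=\frac{\epsilon}{2(N-1)}$ contributes $\tfrac1\alpha = \tfrac{2(N-1)}{\epsilon}$ inside $C$, and naively $N\cdot\tfrac1\alpha$ scales like $N^2/\epsilon$, so the argument must exploit structure (e.g. that at most one user is served per slot, so the bounds $b^n$ cannot \emph{all} be large simultaneously without a recent collision-free service, or equivalently that $\sum_n b^n$ has a much tighter invariant-set bound than $N\cdot\tfrac1\alpha$). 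I would therefore look for a refined deterministic invariant — likely that $\sum_{n}(B^{n,\g}_t - Q^{n,\g}_t)$ or $\sum_n B^{n,\g}_t$ stays $O(N)$ on the relevant set, using \eqref{eq:Bunderg}: each non-served user's bound goes up by $1$ while the served user's bound is reset or held, so over a window of $N$ slots the bounds behave like a round-robin counter capped near $N$. Nailing this structural bound is where the real work is; once it is in hand, the telescoping Lyapunov argument and Little's law (already invoked in Section \ref{sub:model:delay}) finish the proof routinely.
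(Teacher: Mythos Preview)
Your approach is genuinely different from the paper's, and as written it has a real gap. The drift inequality \eqref{eq:stability1} says only that outside the finite set $C$ the one-step drift of the \emph{linear} function $h$ is at most $-\epsilon/2$, a \emph{constant}. Telescoping this (your \eqref{eq:proptel}) controls the long-run fraction of time in $C$, but it gives no handle on $\mathbf{E}[h(Y^{\g}_t)]$ or $\mathbf{E}[Q^{tot}_t]$ themselves: the comparison theorem you cite (\cite[Theorem~14.3.7]{meyn2009markov}) bounds $\mathbf{E}_\pi[f]$ only when the drift is $\le -c\,f + K$, and here $f\equiv 1$, not $f=Q^{tot}$. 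The fixes you list (pass to $h^2$, or find a structural invariant on $\sum_n B^{n,\g}_t$) do not go through cleanly. For $h^2$, the one-step change of $h$ contains the term $\alpha(1-b^v)\mathbf 1_{\{q^v=0\}}$ from the reset in \eqref{eq:Bunderg}, so $\Delta^2$ contributes a term of order $\alpha^2 (b^v)^2$ that must be beaten by $2h\,\mathbf E[\Delta]$; carrying this through does not produce the constant $2N/(1-\lambda^{tot})$. And the proposed invariant ``$\sum_n B^{n,\g}_t=O(N)$'' is simply false: whenever $F_t=1$ the sum $\sum_n B^{n,\g}_t$ increases by $N-1$ (see \eqref{eq:Bunderg}), so during any stretch of successful transmissions it grows without bound. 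The $N^2$ obstacle you flagged is therefore not an artifact of bookkeeping but a symptom that this Lyapunov function is the wrong tool for the sharp delay bound.

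The paper does not use a Lyapunov argument at all for Theorem~\ref{thm:delay}. The key step is a deterministic, sample-path efficiency lemma (Lemma~\ref{lm:successfulq}): whenever $Q^{tot,\g}_t=q$, the protocol achieves at least $q$ successful transmissions in the window $[t,\,t+q+N-1]$. This is proved by analyzing what happens after the first idle slot in the window, using the reset $B^{v}\!\leftarrow 1$ and the ordering of the $B^n$'s. Given the lemma, the paper defines epochs $T_1=N$, $T_k=\min\{t>T_{k-1}:\sum_{\tau=T_{k-1}}^{t-1}\bar U^{\g}_\tau = Q^{tot,\g}_{T_{k-1}}\}$; Lemma~\ref{lm:successfulq} gives $T_k\le T_{k-1}+Q^{tot,\g}_{T_{k-1}}+N$, and since all packets present at $T_{k-1}$ have left by $T_k$, one has $Q^{tot,\g}_{T_k}=\sum_{\tau=T_{k-1}}^{T_k-1}\sum_n A^n_\tau$. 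Taking expectations yields the recursion $\mathbf E[Q^{tot,\g}_{T_k}]\le \lambda^{tot}\big(\mathbf E[Q^{tot,\g}_{T_{k-1}}]+N\big)$, which iterates to $\mathbf E[Q^{tot,\g}_{T_k}]\le \lambda^{tot}N/(1-\lambda^{tot})$, and a short interpolation between consecutive $T_k$'s gives $\mathbf E[Q^{tot,\g}_t]\le 2\lambda^{tot}N/(1-\lambda^{tot})$ for every $t$. The clean factor $2N/(1-\lambda^{tot})$ comes directly from this renewal-type recursion, not from tuning Lyapunov constants.
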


Theorem \ref{thm:delay} says that for any fixed total arrival rate $\lambda^{tot}$, the queuing delay under the $\g$ protocol is linear in the number of users $N$. 

To prove Theorem \ref{thm:delay}, we first present a property of the $\g$ protocol.
\begin{lemma}
\label{lm:successfulq}
Let $\bar{U}_t = \sum_{n=1}^N U^n_t\prod_{m\neq n}(1-U^m_t)$.
If the total number of packets at time $t$ is $Q^{tot,\g}_{t} = q$, there are at least $q$ successful transmissions from time $t$ to $t+q+N-1$ using the $\g$ protocol. That is
\begin{align*}
\sum_{\tau=t}^{t+q+N-1} \bar{U}^{\g}_{\tau} \geq q.
\end{align*}

\end{lemma}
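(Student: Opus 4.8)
The plan is to track how the ``token'' $v(B^{\g}_t)$ — the user permitted to transmit at time $t$ — cycles among the users, and to show that it cannot skip a backlogged user for too long. The key structural facts come from Lemma~\ref{lm:evolution}: the bound $B^{n,\g}_t$ of any non-transmitting user increases by exactly $1$ each slot, the bound of a user who transmits successfully stays put, and the bound of a user granted the token when its queue is empty (so $F_t=0$) resets to $1$. I would first record the elementary consequence that each user's bound stays within a bounded band: since a fresh reset sets a bound to $1$ and every other bound is at most the current maximum $b^v$, after any slot all bounds lie in $\{1,\dots,b^v+1\}$ or so; more usefully, once user $n$ holds the token at time $\tau$, its bound is the maximum, and thereafter it can regain the token only after all the currently-larger-or-equal competitors have been served or reset. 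This gives a ``round-robin within at most $N$ slots'' flavor: between two consecutive times the token is at user $n$, it visits each other user at most once before returning, so the token returns to (or past) user $n$ within $N$ slots — unless a genuine successful transmission or idle-reset intervenes.

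Next I would argue the core claim by a potential/counting argument on the interval $[t, t+q+N-1]$. Suppose for contradiction that fewer than $q$ successful transmissions occur in this window. Each slot in the window is one of three types: a success ($F=1$, $\bar U=1$), an idle reset ($F=0$, token-holder's queue empty), or — impossible under $\g$ — a collision (ruled out because $\g$ is collision-free, only $v(B_t^{\g})$ ever transmits). So every slot is either a success or an idle slot. I would show there can be at most $N-1$ idle slots in the window before a success must occur, by the following reasoning: whenever the token lands on a user with empty queue, that user's bound is reset to $1$, which is (weakly) the smallest possible value, so that user goes to the back of the ``queue of bounds'' and will not be revisited by the token until every other user has been granted it once. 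Hence $N$ consecutive idle slots are impossible while any backlogged packet remains, because within $N$ slots the token must reach a user whose queue is nonempty (there are at most $N-1$ empty queues when $Q^{tot}_t \ge 1$), forcing a success. Therefore in any stretch of $N$ slots during which backlog persists, at least one success occurs; combined with the fact that each success removes exactly one packet while arrivals only add, within $q + N - 1$ slots the original $q$ packets must all have departed (each departure ``uses up'' at most $N-1$ idle slots of delay plus its own slot, but the arrivals after time $t$ don't need to be cleared), giving $\sum_{\tau=t}^{t+q+N-1}\bar U_\tau^{\g}\ge q$.

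The cleanest way to make the middle step rigorous is probably an invariant: define, for the original $q$ packets present at time $t$, a credit that decreases by $1$ at each success and by at most $0$ otherwise, and show the token cannot idle more than $N-1$ times consecutively. I would formalize ``token visits all users within $N$ slots'' by showing that $v(B^{\g}_{t+1}) \neq v(B^{\g}_t)$ whenever $F_t \neq 1$, and that a user once visited cannot be revisited until its bound — reset low or merely frozen — is again the strict (or tie-broken) maximum, which requires every other user to have been non-transmitting (bound $+1$) enough times, i.e.\ the token has moved on. The main obstacle I expect is handling ties in $v(\cdot)$ carefully: when several bounds equal the maximum, the tie-breaking by smallest index could in principle let the token ``stall'' or oscillate among low-index empty users; I need to verify that an idle slot strictly changes the configuration (the reset to $1$ does move that user out of contention unless all others also have bound $\le 1$, in which case the window is tiny anyway) so that no more than $N-1$ idle slots can occur before a backlogged user is necessarily reached. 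Once that monovariant is pinned down, the counting bound $q + N - 1$ follows immediately.
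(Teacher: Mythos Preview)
Your structural intuition is sound---the token really does cycle and an idling user really does drop to the back of the bound order---but the counting step is where the argument breaks. The claim ``no $N$ consecutive idle slots while backlog persists'' is correct (and you sketch the right reason: after an idle the reset bound is strictly below every other bound, so the token must move on). However, this claim alone only gives that each of the first $q$ successes is preceded by at most $N-1$ idles, hence $q$ successes are guaranteed within $qN$ slots, not within $q+N$. Your parenthetical ``each departure uses up at most $N{-}1$ idle slots of delay plus its own slot'' is exactly this weaker bound; summing it over $q$ departures yields $qN$, which does not establish the lemma.

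The paper closes this gap with a sharper two-case argument keyed on the \emph{first} idle, say by user $n$ at time $t_1$. Case~1: if $n$ is never re-selected in the window, then \emph{no} idling user is ever re-selected (any later idler $m$ has $B^m_\tau < B^n_\tau$ thereafter), so each user idles at most once and there are at most $N$ idles \emph{in total}---this is much stronger than ``no $N$ consecutive.'' Case~2: if $n$ is re-selected at some $t_2$, then $B^n_{t_2}=t_2-t_1\geq B^m_{t_2}$ for all $m$, and for each $m\neq n$ the bound evolution forces the number of successes $S^m$ in $(t_1,t_2)$ to satisfy $S^m \geq B^m_{t_1}\geq Q^m_{t_1}$; summing gives $\sum_{m\neq n}S^m\geq Q^{tot}_{t_1}$, and combined with the successes already accrued in $[t,t_1)$ this yields $q$ successes by $t_2$. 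The quantitative link $S^m\geq B^m_{t_1}\geq Q^m_{t_1}$---which exploits that $B$ is an \emph{upper bound} on $Q$, not just a scheduling counter---is the missing ingredient in your plan. Your round-robin monovariant is a good start, but to reach $q+N$ rather than $qN$ you must track either the total idle count (Case~1) or the per-user success count against the initial queue length (Case~2).
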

\begin{proof}
See Appendix \ref{app:lm4}.
\end{proof}

Since $U^n_t\in\{0,1\}$ for each $n,n=1,2,\dots,N$, $\bar{U}_t\in \{0,1\}$; $\bar{U}_t=1$ (respectively, $\bar{U}_t=0$) denotes a successful (respectively, unsuccessful) transmission at time $t$.
Lemma \ref{lm:successfulq} shows that when at a certain time slot the total queue length is $q$,
the $\g$ protocol can successfully transmit at least $q$ packets in the next $q+N-1$ time slots.

Using Lemma \ref{lm:successfulq}, we can now prove Theorem \ref{thm:delay}.

\begin{proof}[Proof of Theorem \ref{thm:delay}]
Let $T_1 = N$, and define recursively the random variables $T_2,T_3,\dots$ by
\begin{align*}
T_k = \left\{\min t: t>T_{k-1}, \sum_{\tau = T_{k-1}}^{t-1} \bar{U}^\g_{\tau} = Q^{tot,\g}_{T_{k-1}}  \right\}.
\end{align*}
Then, each $T_k$ is the time such that $Q^{tot,\g}_{T_{k-1}}$ packets are successfully transmitted
from time $T_{k-1}$ to $T_{k}-1$ under the $\g$ protocol.

By Lemma \ref{lm:successfulq} the $\g$ protocol can successfully transmit at least $Q^{tot,\g}_{T_{k-1}}$ packets from time
$T_{k-1}$ to $T_{k-1}+Q^{tot,\g}_{T_{k-1}}+N-1$. Therefore
\begin{align}
T_k \leq T_{k-1}+Q^{tot,\g}_{T_{k-1}}+N.
\label{eq:Tk}
\end{align}
Consequently, from the dynamics of queues and \eqref{eq:Tk} we obtain
\begin{align}
 \mathbf{E}\left[Q^{tot,\g}_{T_k}\right] 
\leq & \lambda^{tot}\left(\mathbf{E}\left[Q^{tot,\g}_{T_{k-1}}\right]+N\right).
\label{eq:QTkrec}
\end{align}
(see Appendix \ref{app:thm:delay} for a detailed derivation of \eqref{eq:QTkrec})
\\
Since $T_1 = N$, $\mathbf{E}\left[Q^{tot,\g}_{T_{1}}\right] 
\leq \mathbf{E}\left[\sum_{t=0}^{N-1} \sum_{n=1}^N A^n_t\right] = \lambda^{tot} N $.
From \eqref{eq:QTkrec}, we can show, recursively, that for all $k$
\begin{align}
 \mathbf{E}\left[Q^{tot,\g}_{T_k}\right] \leq &\lambda^{tot} N + (\lambda^{tot})^2 N+\dots+(\lambda^{tot})^k N \nonumber\\
                                          \leq &\frac{\lambda^{tot} N}{1-\lambda^{tot}}.
\label{eq:QTkbound}
\end{align}
Now for any time $t=0,1,2,\dots$, for any realization of arrivals there is some number $k$ such that $ T_{k-1} < t \leq T_{k} $ ($T_0:= 0$). Using \eqref{eq:QTkbound} and the dynamics of queues we get
\begin{align}
\mathbf{E}\left[Q^{tot,\g}_{t}\right] 
\leq & 2 \frac{\lambda^{tot} N}{1-\lambda^{tot}}
\label{eq:qt}
\end{align}
(see Appendix \ref{app:thm:delay} for a detailed derivation of \eqref{eq:qt})
\\
Since \eqref{eq:qt} holds for any time $t$, we have
\begin{align}
Q^\g_{\text{avg}}
= &\limsup_{t\rightarrow\infty} \frac{1}{T} \sum_{t=0}^{T-1}\mathbf{E}\left[Q^{tot,\g}_t\right] \nonumber\\
\leq & \limsup_{t\rightarrow\infty} \frac{1}{T} \sum_{t=0}^{T-1}\frac{2\lambda^{tot} N}{1-\lambda^{tot}} \nonumber\\
= & \frac{2\lambda^{tot} N}{1-\lambda^{tot}}.
\end{align}

\end{proof}

\begin{remark}
The result of Theorem \ref{thm:delay} implies throughput optimality of the $\g$ protocol.
Since the bound on delay (the right hand side of \eqref{eq:delaybound}) is finite for every $\lambda\in\Lambda$, it can be shown that the stability requirement described by \eqref{def:stability} is satisfied. Nevertheless, the proof of Theorem \ref{thm:throughput_optimal} is interesting/instructive by itself, and for this reason we have proved throughput optimality and delay performance separately.
\end{remark}

\section{Simulation Results}
\label{sec:simulation}

In this section we first compare, via simulation, the queueing delay incurred by $\g$ with 
that of three other protocols that use the same feedback information and no channel sensing: the basic TDMA protocol, the adaptive TDMA (ATDMA) protocol \cite{papadimitriou2003adaptive} and the quadratic back-off protocol which is proved to be throughput optimal in \cite{haastad1996analysis}. 
In the quadratic back-off protocol, each user transmits a packet with probability $(c+1)^{-2}$ where $c$ is the back-off counter.
We also compare the delay performance of $\g$ with CSMA protocols proposed in\cite{chlebus2012adversarial,anantharamu2009adversarial}; these protocols employ channel sensing before transmission scheduling.

In the numerical experiments, we have used different values of $N$ and $\lambda^{tot}$ for each protocol.
Arrival rates are asymmetric: half of the users have arrival rate $1.4\lambda^{tot}/N$ and the other half of the users have arrival rate $0.6\lambda^{tot}/N$.
For each $N$ and $\lambda^{tot}$, we run the simulation for $T = 10^5$ time steps. 


The simulation results of Fig. \ref{fig:CIMA} show that the average delay associated with the $\g$ protocol is linear in the number of users. These simulation results are consistent with the result of Theorem \ref{thm:delay}.

In Fig. \ref{fig:Compare}, we compare the delay performance of TDMA, ATDMA, quadratic back-off and CIMA protocols for a system of $4$ users.
Fig. \ref{fig:Compare} shows that the delay associated with the $\g$ protocol is significantly smaller than that of the quadratic back-off protocol (that is also throughput optimal) and of the TDMA protocol (note that TDMA is unstable when $\lambda^{tot}>0.7$ ). $\g$'s delay is also smaller than the delay of ATDMA (note that there is no theoretical analysis for ATDMA).

In Fig. \ref{fig:Compare_CSMA}, we compare the delay performance of $\g$ with two CSMA protocols: 
the PGD protocol proposed in \cite{jiang2012fast}, and the DCSMA protocol proposed in \cite{lee2014provable}.
The results of \cite{jiang2012fast} and \cite{lee2014provable} prove that the two CSMA protocols achieve delay that is linear in the number of system users.
That is, the delay of the CSMA protocols is of the same order as the delay of the\ $\g$ protocol.
However, channel sensing is required to implement the two CSMA protocols.
Moreover, Fig. \ref{fig:Compare_CSMA} shows that the delay resulting from the $\g$ protocol is significantly smaller than that of the CSMA protocols of \cite{jiang2012fast} and \cite{lee2014provable}.

\begin{figure}
\includegraphics[width=0.5 \textwidth]{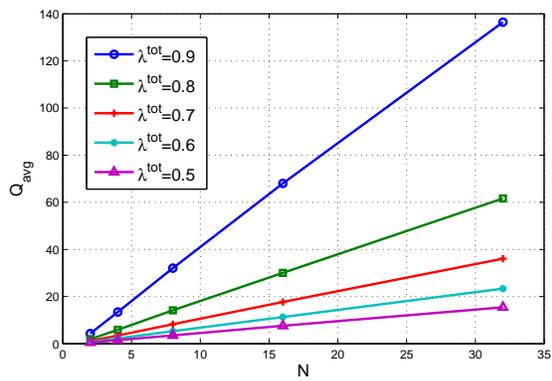}
\caption{Delay versus the number of users of $\g$}
\label{fig:CIMA}
\end{figure}
\begin{figure}
\includegraphics[width=0.5 \textwidth]{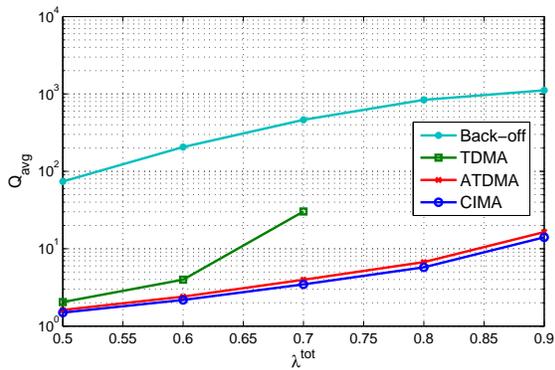}
\caption{Comparison of protocols for a system of $4$ users}
\label{fig:Compare}
\end{figure}

\begin{figure}
\includegraphics[width=0.5 \textwidth]{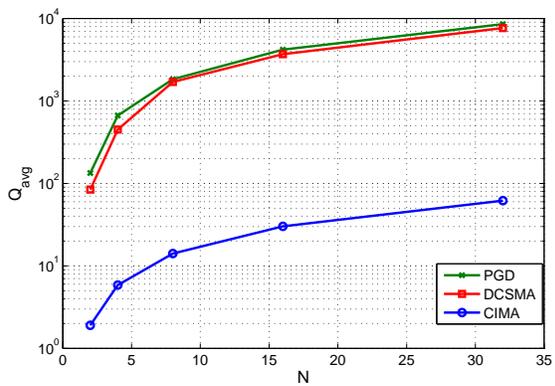}
\caption{Comparison of $\g$ and CSMA protocols}
\label{fig:Compare_CSMA}
\end{figure}

\section{Conclusion}
\label{sec:conclusion}
We developed a transmission protocol that utilizes the common information of the system's users to achieve efficient/optimal coordination of their transmissions.
The protocol is collisions free; thus, it is similar in spirit to TDMA (or adaptive TDMA), but it differs from TDMA in the way it selects the user to transmit at each time slot.
Intuitively, we expect that the delay due to the $\g$ protocol will increase linearly with the number of users. The result of Theorem \ref{thm:delay} confirms this intuition.

The problem investigated in this paper can be viewed as a decentralized control/decision-making problem with non-classical information structure \cite{witsenhausen1971separation}. Decentralized stochastic control problems with non-classical information structure are signaling problems \cite{ho1980team}. In our setup signaling occurs through the feedback provided by the collision channel. Signaling leads to adjustments of each user's upper bounds on their queue lengths (in the manner described by $\g$) and results in efficient coordination among the users.
Signaling also occurs in CSMA protocols and in adaptive TDMA, but it is distinctly different from that of the $\g$ protocol.


%

\appendices
\section{}
\label{app:lm1}
\begin{proof}[Proof of Lemma \ref{lm:conditional_indep}]
The lemma is proved by induction.
Equation \eqref{eq:conditional_indep} is true at $t=0$ because all queues are initially empty.
Suppose \eqref{eq:conditional_indep} is true at $t=k$. At time $t=k+1$ we have
\begin{align}
&\mathbf{P}^{\lambda,\g}(q_{k+1}|f_{0:k})\nonumber\\
= &
\frac{\mathbf{P}^{\lambda,\g}(q_{k+1}, f_k|f_{0:k-1})}{\sum_{q'_{k+1}}\mathbf{P}^{\lambda,\g}(q'_{k+1},f_k|f_{0:k-1}) }.
\label{eq:lm_conditional1}
\end{align}
Let $v = v(b^{\g}_t)$.
Consider the numerator in \eqref{eq:lm_conditional1}. There are two cases: $f_k = 1$ and $f_k = 0$.
\\
When $f_k = 1$, we have
\begin{align}
&\mathbf{P}^{\lambda,\g}(q_{k+1}, F_k=1|f_{0:k-1})\nonumber\\
= &\sum_{q_k}
\mathbf{P}^{\lambda,\g}(q_{k+1},q_{k}, F_k=1|f_{0:k-1})\nonumber\\
= &\sum_{q_k}
\mathbf{P}^{\lambda,\g}(q_{k+1},q_{k}, Q^{v}_{k}>0|f_{0:k-1})\label{eq:lm_conditional2_p1}\\
= &\sum_{q_k, q^{v}_{k}>0}
\left[\vphantom{\prod_{n\neq v}} \mathbf{P}^{\lambda,\g}(q_{k}|f_{0:k-1})\right.\nonumber\\
&\left.
\mathbf{P}^{\lambda}(A^v_k = q^v_{k+1}-q^v_{k}+1)\prod_{n\neq v}\mathbf{P}^{\lambda}(A^n_k = q^n_{k+1}-q^n_{k}) \right]
\label{eq:lm_conditional2_p2}\\
= &\sum_{q_k, q^{v}_{k}>0}
\left[\vphantom{\prod_{n\neq v}} \prod_{n =1}^N \mathbf{P}^{\lambda,\g}(q^n_{k}|f_{0:k-1})\right.\nonumber\\
&\left.
\mathbf{P}^{\lambda}(A^v_k = q^v_{k+1}-q^v_{k}+1)\prod_{n\neq v}\mathbf{P}^{\lambda}(A^n_k = q^n_{k+1}-q^n_{k}) \right]
\label{eq:lm_conditional2_p3}\\
= 
&\prod_{n\neq v}\left[\sum_{q^n_k}\mathbf{P}^{\lambda}(A^n_k = q^n_{k+1}-q^n_{k})\mathbf{P}^{\lambda,\g}(q^n_{k}|f_{0:k-1}) \right]
\nonumber\\
&
\sum_{q^{v}_{k}>0}\mathbf{P}^{\lambda}(A^v_k = q^v_{k+1}-q^v_{k}+1)
\mathbf{P}^{\lambda,\g}(q^v_{k}|f_{0:k-1}).
\label{eq:lm_conditional2}
\end{align}
Equation \eqref{eq:lm_conditional2_p1} holds because $F_k = 1$ if and only if $Q^{v(B^{\g}_t)}_{k}>0$.
Equation \eqref{eq:lm_conditional2_p2} is true because of the system dynamics \eqref{eq:qdynamics} and the fact that
$A^n_k, n=1,2,\dots,N$ are mutually independent and independent of all variables before $k$. 
Equation \eqref{eq:lm_conditional2_p3} follows from the induction hypothesis for \eqref{eq:conditional_indep}. 
Equation \eqref{eq:lm_conditional2} is true because each term in \eqref{eq:lm_conditional2_p3} depends only on each $q^n_k$ for $n=1,2,\dots,N$.
\\
Using \eqref{eq:lm_conditional2}, the denominator in \eqref{eq:lm_conditional1} becomes
\begin{align}
&\sum_{q'_{k+1}}\left\{
\prod_{n\neq v}\left[\sum_{q^n_k}\mathbf{P}^{\lambda}(A^n_k = q'^{n}_{k+1}-q^n_{k})\mathbf{P}^{\lambda,\g}(q^n_{k}|f_{0:k-1}) \right]
\right.
\nonumber\\
&\quad\left.
\sum_{q^{v}_{k}>0}\mathbf{P}^{\lambda}(A^v_k = q'^{v}_{k+1}-q^v_{k}+1)
\mathbf{P}^{\lambda,\g}(q^v_{k}|f_{0:k-1})\right\}
\nonumber\\
=&
\sum_{q^{v}_{k}>0}\mathbf{P}^{\lambda,\g}(q^v_{k}|f_{0:k-1})\nonumber\\
=&
\mathbf{P}^{\lambda,\g}(Q^v_{k}>0|f_{0:k-1});
\label{eq:lm_conditional2d}
\end{align}
equation \eqref{eq:lm_conditional2d} is true because all possible values of $q^n_k,n\neq v$ are summed out.
\\
Substituting \eqref{eq:lm_conditional2} and \eqref{eq:lm_conditional2d} back into \eqref{eq:lm_conditional1} we obtain for $f_k=1$
\begin{align}
&\mathbf{P}^{\lambda,\g}(q_{k+1}|f_{0:k})\nonumber\\
=
&\prod_{n\neq v}\left[\sum_{q^n_k}\mathbf{P}^{\lambda}(A^n_k = q^n_{k+1}-q^n_{k})\mathbf{P}^{\lambda,\g}(q^n_{k}|f_{0:k-1}) \right]
\nonumber\\
&
\left[\sum_{q^{v}_{k}>0}\mathbf{P}^{\lambda}(A^v_k = q^v_{k+1}-q^v_{k}+1)
\frac{\mathbf{P}^{\lambda,\g}(q^v_{k}|f_{0:k-1})}{\mathbf{P}^{\lambda,\g}(Q^v_{k}>0|f_{0:k-1})}\right]
\nonumber\\
=: & \prod_{n=1}^N \phi^n(q^n_{k+1}),
\label{eq:lm_conditional2r}
\end{align}
where, for $n\neq v$,
\begin{align}
&\phi^n(q^n_{k+1})\nonumber\\
:=&
\sum_{q^n_k}\mathbf{P}^{\lambda}(A^n_k = q^n_{k+1}-q^n_{k})\mathbf{P}^{\lambda,\g}(q^n_{k}|f_{0:k-1}) \nonumber\\
=&\mathbf{P}^{\lambda}(A^n_k = 1)\mathbf{P}^{\lambda,\g}(Q^n_k = q^n_{k+1}-1|f_{0:k-1})
\nonumber\\
&+\mathbf{P}^{\lambda}(A^n_k = 0)\mathbf{P}^{\lambda,\g}(Q^n_k = q^n_{k+1}|f_{0:k-1})
\nonumber\\
=&\lambda^n\mathbf{P}^{\lambda,\g}(Q^n_k = q^n_{k+1}-1|f_{0:k-1})
\nonumber\\
&+(1-\lambda^n)\mathbf{P}^{\lambda,\g}(Q^n_k = q^n_{k+1}|f_{0:k-1}),
\label{eq:lm_conditional2r_n}
\end{align}
and for $n=v$,
\begin{align}
&\phi^v(q^v_{k+1})\nonumber\\
:= &
\sum_{q^{v}_{k}>0}\mathbf{P}^{\lambda}(A^v_k = q^v_{k+1}-q^v_{k}+1)
\frac{\mathbf{P}^{\lambda,\g}(q^v_{k}|f_{0:k-1})}{\mathbf{P}^{\lambda,\g}(Q^v_{k}>0|f_{0:k-1})} \nonumber\\
= & \mathbf{P}^{\lambda}(A^v_k = 1)
\frac{\mathbf{P}^{\lambda,\g}(Q^v_{k} = q^v_{k+1}|f_{0:k-1})\mathbf{1}_{\{q^n_{t+1}>0\}}}{\mathbf{P}^{\lambda,\g}(Q^v_{k}>0|f_{0:k-1})}
\nonumber\\
&+ \mathbf{P}^{\lambda}(A^v_k = 0)
\frac{\mathbf{P}^{\lambda,\g}(Q^v_{k} = q^v_{k+1}+1|f_{0:k-1})}{\mathbf{P}^{\lambda,\g}(Q^v_{k}>0|f_{0:k-1})}
\nonumber\\
= & \lambda^v
\frac{\mathbf{P}^{\lambda,\g}(Q^v_{k} = q^v_{k+1}|f_{0:k-1})\mathbf{1}_{\{q^v_{k+1}>0\}}}{\mathbf{P}^{\lambda,\g}(Q^v_{k}>0|f_{0:k-1})}
\nonumber\\
&+ (1-\lambda^v)
\frac{\mathbf{P}^{\lambda,\g}(Q^v_{k} = q^v_{k+1}+1|f_{0:k-1})}{\mathbf{P}^{\lambda,\g}(Q^v_{k}>0|f_{0:k-1})}.
\label{eq:lm_conditional2r_v}
\end{align}
Equations \eqref{eq:lm_conditional2r_n} and \eqref{eq:lm_conditional2r_v} follow from \eqref{eq:lm_conditional2r} and the fact that $A^n_k$ takes values in $\{0,1\}$ for all $n=1,2,\dots,N$.
\\
From \eqref{eq:lm_conditional2r_n} and \eqref{eq:lm_conditional2r_v} we conclude that $\phi^n(q^n_{k+1})$ is a probability mass function (PMF) for all $n$. This along with \eqref{eq:lm_conditional2r} establish that the marginal conditional PMF satisfies
\begin{align}
\mathbf{P}^{\lambda,\g}(q^n_{k+1}|f_{0:k}) 
=&\phi^n(q^n_{k+1})
\label{eq:lm_conditional_marginal}
\end{align}
for all $n$, and this establish the induction step when $f_k=1$.
\\
When $f_k = 0$, by arguments similar to those in \eqref{eq:lm_conditional2_p1}-\eqref{eq:lm_conditional_marginal}, we get
\begin{align}
\mathbf{P}^{\lambda,\g}(q_{k+1}|f_{0:k})
= & \prod_{n=1}^N \mathbf{P}^{\lambda,\g}(q^n_{k+1}|f_{0:k}),
\end{align}
where for $n\neq v$, by an argument similar to \eqref{eq:lm_conditional2r_n}, we get
\begin{align}
&\mathbf{P}^{\lambda,\g}(q^n_{k+1}|f_{0:k}) \nonumber\\
=&\lambda^n\mathbf{P}^{\lambda,\g}(Q^n_k = q^n_{k+1}-1|f_{0:k-1})
\nonumber\\
&+(1-\lambda^n)\mathbf{P}^{\lambda,\g}(Q^n_k = q^n_{k+1}|f_{0:k-1}),
\label{eq:lm_conditional2r_n0}
\end{align}
and for $n=v$, by an argument similar to \eqref{eq:lm_conditional2r_v}, we obtain
\begin{align}
&\mathbf{P}^{\lambda,\g}(q^v_{k+1}|f_{0:k})  \nonumber\\
= & \lambda^v \mathbf{1}_{\{q^v_{k+1}=1\}}+ (1-\lambda^v)\mathbf{1}_{\{q^v_{k+1}=0\}}.
\label{eq:lm_conditional2r_v0}
\end{align}
Therefore, the induction step is complete and \eqref{eq:conditional_indep} holds for all $t$.
Furthermore, \eqref{eq:conditional_update1} is established by \eqref{eq:lm_conditional2r_n} and \eqref{eq:lm_conditional2r_n0};
\eqref{eq:conditional_update2} is established by \eqref{eq:lm_conditional2r_v}, and \eqref{eq:conditional_update3} is established by \eqref{eq:lm_conditional2r_v0}.

\end{proof}

\section{}
\label{app:lm2}
\begin{proof}[Proof of Lemma \ref{lm:evolution}]
Equation \eqref{eq:Qunderg} follows directly from \eqref{eq:qdynamics}, the queue length dynamics, and \eqref{eq:gmac}, the definition of the $\g$ protocol.
\\
For the common upper bounds, 
let $v = v(B^{\g}_t)$, which is a function of $F_{0:t-1}$.
\\
For $n \neq v$,  we get $B^{n,\g}_{t+1} = B^{n,\g}_{t}+1$ form \eqref{eq:conditional_update1} in Lemma \ref{lm:conditional_indep}.
\\
For $n = v$ and $F_{t} =1$, we obtain
$B^{v,\g}_{t+1} = B^{v,\g}_{t}$ from \eqref{eq:conditional_update2} in Lemma \ref{lm:conditional_indep}.
\\
For $n = v$ and $F_t = 0$, \eqref{eq:conditional_update3} in Lemma \ref{lm:conditional_indep} gives $B^{v,\g}_{t+1} = 1$, and the proof of the lemma is complete.

\end{proof}

\section{}
\label{app:lm3}
\begin{proof}[Proof of Lemma \ref{lm:MC}]
From \eqref{eq:Qunderg} and \eqref{eq:Bunderg} in Lemma \ref{lm:evolution} we know that
$Q^{\g}_{t+1}$ and $B^{\g}_{t+1}$ are functions of 
$Q^{\g}_{t},B^{\g}_{t}$, $A^n_t$ and $F_t$.
From \eqref{eq:gmac}, the definition of the $\g$ protocol, we know that
\begin{align*}
F_t = U^{v(B^{\g}_t)}_t
 = \mathbf{1}_{\left\{Q^{v(B^{\g}_t),\g}_{t}>0\right\}}.
\end{align*}
Therefore, $F_t$ is a function of $Q^{\g}_{t}$ and $B^{\g}_{t}$.
Consequently, $Y^{\g}_{t+1}$ is a function of $Q^{\g}_{t},B^{\g}_{t}$ and $A^n_t$.
Let $f(Y^{\g}_t,A^n_t):=Y^{\g}_{t+1}$, we have
\begin{align*}
&\mathbf{P}(
Y^{\g}_{t+1} = y_{t+1} 
|Y^{\g}_k = y_k, k\leq t)\\
= & \mathbf{P}(
f(Y^{\g}_t,A^n_t) = y_{t+1} 
|Y^{\g}_k = y_k, k\leq t)\\
= & \mathbf{P}(
f(y_t,A^n_t) = y_{t+1} 
|Y^{\g}_k = y_k, k\leq t)\\
\stackrel{(*)}{=}  & \mathbf{P}(
f(y_t,A^n_t) = y_{t+1} 
|Y^{\g}_t = y_t)\\
=&\mathbf{P}(
Y^{\g}_t = y_{t+1} 
|Y^{\g}_t = y_t),
\end{align*}
where (*) is true because $A^n_t$ is independent of $Q^{\g}_{t},B^{\g}_{t}$ and all random variables before time slot $t$.
\\
Therefore, $\{Y^{\g}_t, t=0,1,\dots\}$ is a Markov chain.

\end{proof}

\section{}
\label{app:thm}
\begin{proof}[Detailed derivation of \eqref{eq:stability1} in the proof of Theorem \ref{thm:throughput_optimal}]
From the definition of the Lyapunov function $h(\cdot)$ ( cf \eqref{eq:lyafun}) we have
\begin{align}
 & \mathbf{E}\left[h(Y^{\g}_{t+1}) - h(Y^{\g}_t) | Y^{\g}_t=y\right] \nonumber\\
= &\mathbf{E}\left[ 
\sum_{n=1}^N (Q^{n,\g}_{t+1}-Q^{n,\g}_t) | Y^{\g}_t=y\right]\nonumber\\
& + \alpha \mathbf{E}\left[ 
\sum_{n=1}^N (B^{n,\g}_{t+1}- B^{n,\g}_t) | Y^{\g}_t=y\right].
\label{eq:app:thm:eq1}
\end{align}
For $n\neq v$, from \eqref{eq:Qunderg} and \eqref{eq:Bunderg} in Lemma \ref{lm:evolution}, we get
\begin{align}
 &\mathbf{E}\left[ (Q^{n,\g}_{t+1}-Q^{n,\g}_t) | Y^{\g}_t=y\right]\nonumber\\
& + \alpha \mathbf{E}\left[ 
(B^{n,\g}_{t+1}- B^{n,\g}_t) | Y^{\g}_t=y\right]\nonumber\\
= &\mathbf{E}\left[ A^n_t| Y^{\g}_t=y\right] + \alpha \mathbf{E}\left[ 1 | Y^{\g}_t=y\right]\nonumber\\
= & \lambda^n + \alpha,
\label{eq:app:thm:eq1n}
\end{align}
where the last equality in \eqref{eq:app:thm:eq1n} holds because $A^n_t$ is independent of $Y^{\g}_t$.
\\
For $n= v$, from \eqref{eq:Qunderg} and \eqref{eq:Bunderg} in Lemma \ref{lm:evolution}, we obtain
\begin{align}
 &\mathbf{E}\left[ (Q^{v,\g}_{t+1}-Q^{v,\g}_t) | Y^{\g}_t=y\right]\nonumber\\
& + \alpha \mathbf{E}\left[ 
(B^{v,\g}_{t+1}- B^{v,\g}_t) | Y^{\g}_t=y\right]\nonumber\\
= &\mathbf{E}\left[ A^v_t-1 + 1_{\{Q^v_t=0\}}| Y^{\g}_t=y\right] \nonumber\\
&+ \alpha \mathbf{E}\left[ (1-B^v_t)1_{\{Q^v=0\}} | Y^{\g}_t=y\right]\nonumber\\
= &\mathbf{E}\left[ A^v_t-1 + 1_{\{q^v=0\}}| Y^{\g}_t=y\right] \nonumber\\
&+ \alpha \mathbf{E}\left[ (1-b^v)1_{\{q^v=0\}} | Y^{\g}_t=y\right]\nonumber\\
= & \lambda^v - 1 + (1 + \alpha(1-b^v))   1_{\{q^v=0\}},
\label{eq:app:thm:eq1v}
\end{align}
where the last equality in \eqref{eq:app:thm:eq1v} follows from the fact that $A^v_t$ is also independent of $Y^{\g}_t$.
\\
substituting \eqref{eq:app:thm:eq1n} and \eqref{eq:app:thm:eq1v} back into \eqref{eq:app:thm:eq1} we get
\begin{align}
 & \mathbf{E}\left[h(Y^{\g}_{t+1}) - h(Y^{\g}_t) | Y^{\g}_t=y\right] \nonumber\\
= &\sum_{n \neq v} \lambda^n +\alpha(N-1) \nonumber\\
&+ \lambda^v - 1 + (1 + \alpha(1-b^v))   1_{\{q^v=0\}}\nonumber\\
\stackrel{(a)}{=}  &-\epsilon +\alpha(N-1)
+ (1+\alpha(1-b^v))1_{\{q^v=0\}} \nonumber\\
\stackrel{(b)}{=}  &-\epsilon/2
+ (1+\alpha(1-b^v))1_{\{q^v=0\}} \nonumber\\
\leq & -\epsilon/2 \quad \text{   if } b^v \geq \frac{1}{\alpha}+1,
\label{eq:app:thm:eqlast}
\end{align}
where (a) in \eqref{eq:app:thm:eqlast} is true because $\sum_{n=1}^N \lambda^n = 1-\epsilon$, and (b) in \eqref{eq:app:thm:eqlast} is true because $\alpha = \frac{\epsilon}{2(N-1)}$.
Consequently, inequality \eqref{eq:stability1} in the proof of Theorem \ref{thm:throughput_optimal} is established.
\end{proof}

\section{}
\label{app:lm4}
\begin{proof}[Proof of Lemma \ref{lm:successfulq}]
The lemma holds if there is no unsuccessful transmission from time $t$ to $t+q+N-1$.
Otherwise, suppose the first unsuccessful transmission is from user $n$ at time $t_1, t\leq t_1\leq t+q+N-1$.
Since $v(B^{\g}_{t_1})=n$ and no packet is transmitted at time $t_1$, every user will update the upper bound $B^{n,\g}_{t_1+1} = 1$ for user $n$.
From the evolution of the upper bounds we have
\begin{align}
B^{n,\g}_{\tau} = \tau-t_1
\label{eq:Bntau}
\end{align}
for any time $\tau$ if user $n$ is not selected again by $\g$ before time $\tau$.

There are two possibilities: (1) user $n$ is not selected by $\g$ again before time $t+q+N$; (2) user $n$ is selected by $\g$ again at time $t_2$ where $t_1+1\leq t_2\leq t+q+N-1$.

First consider the case when user $n$ is not selected by $\g$ again before time $t+q+N$.
Then \eqref{eq:Bntau} holds for any time $\tau$ from $t_1+1$ to $t+q+N-1$. 
From the specification of $\g$, if any other user $m$ has an unsuccessful transmission at time $t'$, $t_1+1\leq t' \leq t+q+N-1$,
for any subsequent time $\tau \geq t'+1 $ we will have
\begin{align}
B^{m,\g}_{\tau} \leq \tau-t' < \tau-t_1 = B^{n,\g}_{\tau}.
\end{align}
Therefore, user $m$ will not be selected by $\g$ again from time $t'+1$ to $t+q+N-1$. Consequently, any user $m \neq n$ can have at most one unsuccessful transmission from time $t$ to $t+q+N-1$.
Since any of the $N$ users can have at most one unsuccessful transmission from time $t$ to $t+q+N-1$, the number of successful transmissions during this time period is at least $(t+q+N -1)-t+1- N = q$.

Next consider the case when user $n$ is selected by $\g$ again at time $t_2$ where $t_1+1\leq t_2\leq t+q+N-1$.
From the specification of $\g$, we must have $B^{n,\g}_{t_2} = \max_{m} B^{m,\g}_{t_2}$ for user $n$ to transmit at time $t_2$.
Therefore, letting $\tau = t_2$ in \eqref{eq:Bntau} we get
\begin{align}
t_2-t_1 = B^{n,\g}_{t_2} \geq B^{m,\g}_{t_2}
 \label{eq:Bmt2up}
\end{align}
for all $m \neq n$.
Let $S^m, m\neq n$ be the number of successful transmissions for user $m$ between time $t_1$ and $t_2$.
We prove in the following that $S^m \geq Q^{m,\g}_{t_1}$.
\\
If user $m$ has an unsuccessful transmission between $t_1$ and $t_2$, then the queue at user $m$ is empty at the time of the unsuccessful transmission. Therefore,
$S^m \geq Q^{m,\g}_{t_1}$ because all the $Q^{m,\g}_{t_1}$ packets queued at time $t_1$ are successfully transmitted by user $m$ between time $t_1$ and $t_2$. 
\\
If user $m$ transmits successfully in every time slot selected by $\g$, from \eqref{eq:Bmt2up} and the evolution of the upper bounds we obtain
\begin{align}
t_2-t_1 \geq B^{m,\g}_{t_2} = B^{m,\g}_{t_1}+t_2-t_1-S^m.
\label{eq:Smlb2pre}
\end{align}
Since $B^{m,\g}_{t_1} \geq Q^{m,\g}_{t_1}$, \eqref{eq:Smlb2pre} implies
\begin{align}
S^m \geq B^{m,\g}_{t_1} \geq Q^{m,\g}_{t_1}.
\label{eq:Smlb2}
\end{align}
Consequently, for every user $m\neq n$
\begin{align}
S^m \geq Q^{m,\g}_{t_1}.
\end{align}
Note that the total number of successful transmissions between $t_1$ and $t_2$ is $\sum_{m\neq n}S^m $; therefore,
\begin{align}
\sum_{\tau=t_1+1}^{t_2-1} \bar{U}_{\tau} = \sum_{m\neq n}S^m \geq \sum_{m\neq n}Q^{m,\g}_{t_1} = Q^{tot,\g}_{t_1}
\label{eq:Qtott1}
\end{align}
where the last equation in \eqref{eq:Qtott1} holds because $Q^{n,\g}_{t_1}=0$.
\\
From the dynamics of queues we get
\begin{align}
Q^{tot,\g}_{t_1} = & Q^{tot,\g}_{t}+ \sum_{\tau=t}^{t_1-1} \left( \sum_{n=1}^N A^n_{\tau}-\bar{U}_{\tau}\right) \nonumber\\
			     \geq & q- \sum_{\tau=t}^{t_1-1}\bar{U}_{\tau} 
\label{eq:Qtotandt}
\end{align}
Combining \eqref{eq:Qtott1} and \eqref{eq:Qtotandt}, the total number of successful transmissions from time $t$ to $t+q+N-1$ in the second case is at least
\begin{align}
\sum_{\tau=t}^{t+q+N-1} \bar{U}_{\tau}
\geq & \sum_{\tau=t}^{t_1-1} \bar{U}_{\tau}+\sum_{\tau=t_1+1}^{t_2-1} \bar{U}_{\tau} \nonumber\\
\geq & q - Q^{tot,\g}_{t_1} + Q^{tot,\g}_{t_1}  = q.
\end{align}

\end{proof}

\section{}
\label{app:thm:delay}
\begin{proof}[Detailed derivation of \eqref{eq:QTkrec} and \eqref{eq:qt}  in the proof of Theorem \ref{thm:delay}]
\\
\textit{Detailed derivation of \eqref{eq:QTkrec}:}
\\
From \eqref{eq:Tk} and the dynamics of queues we obtain
\begin{align}
Q^{tot,\g}_{T_k} = & Q^{tot,\g}_{T_{k-1}} + \sum_{t=T_{k-1}}^{T_k-1}\left(\sum_{n=1}^N A^n_t - \bar{U}^\g_t\right) \nonumber\\
 = & Q^{tot,\g}_{T_{k-1}} - \sum_{t=T_{k-1}}^{T_k-1} \bar{U}^\g_t+ \sum_{t=T_{k-1}}^{T_k-1}\sum_{n=1}^N A^n_t \nonumber\\
 = &  \sum_{t=T_{k-1}}^{T_k-1}\sum_{n=1}^N A^n_t 
 \label{eq:QTkeqA}\\
 \leq &  \sum_{t=T_{k-1}}^{T_{k-1}+Q^{tot,\g}_{T_{k-1}}+N-1}\sum_{n=1}^N A^n_t .
 \label{eq:QTk}
\end{align}
Equation \eqref{eq:QTkeqA} follows from the definition of $T_k$. 
Inequality \eqref{eq:QTk} is true because of \eqref{eq:Tk} and the fact that $A^n_t$ are all positive.
Note that $A^n_t$ is independent of $T_{k-1}$ and $Q^{tot,\g}_{T_{k-1}}$ for $t\geq T_{k-1}$. 
Therefore, taking the expectation on both sides of \eqref{eq:QTk} we get
\begin{align}
 &\mathbf{E}\left[Q^{tot,\g}_{T_k}\right] \nonumber\\
\leq & \mathbf{E}\left[\sum_{t=T_{k-1}}^{T_{k-1}+Q^{tot,\g}_{T_{k-1}}+N-1}\sum_{n=1}^N A^n_t \right] \nonumber\\
= & \mathbf{E}\left[
\mathbf{E}\left[\left.\sum_{t=T_{k-1}}^{T_{k-1}+Q^{tot,\g}_{T_{k-1}}+N-1}\sum_{n=1}^N A^n_t
\right|T_{k-1},Q^{tot,\g}_{T_{k-1}}\right] 
\right] \nonumber\\
= & \mathbf{E}\left[
\sum_{t=T_{k-1}}^{T_{k-1}+Q^{tot,\g}_{T_{k-1}}+N-1}\sum_{n=1}^N 
\mathbf{E}\left[A^n_t | T_{k-1},Q^{tot,\g}_{T_{k-1}}\right] 
\right] \nonumber\\
= & \mathbf{E}\left[
\sum_{t=T_{k-1}}^{T_{k-1}+Q^{tot,\g}_{T_{k-1}}+N-1}\sum_{n=1}^N \lambda^n\right] \nonumber\\
= & \lambda^{tot}\left(\mathbf{E}\left[Q^{tot,\g}_{T_{k-1}}\right]+N\right).
\end{align}

\textit{Detailed derivation of \eqref{eq:qt}:}
\\
For any time $t=0,1,2,\dots$, suppose $ T_{k-1} < t \leq T_{k} $ ($T_0:= 0$). Using \eqref{eq:QTkrec} and the dynamics of queues we get
\begin{align}
&\mathbf{E}\left[Q^{tot,\g}_{t}\right] \nonumber\\
= & 
\mathbf{E}\left[Q^{tot,\g}_{T_{k-1}} + \sum_{\tau=T_{k-1}}^{t-1}\left(\sum_{n=1}^N A^n_{\tau} - \bar{U}^\g_{\tau}\right)\right] \nonumber\\
\leq & 
\mathbf{E}\left[Q^{tot,\g}_{T_{k-1}} + \sum_{\tau=T_{k-1}}^{t-1}\left(\sum_{n=1}^N A^n_{\tau} \right)\right] \nonumber\\
\leq & 
\mathbf{E}\left[Q^{tot,\g}_{T_{k-1}} + \sum_{\tau=T_{k-1}}^{T_{k}-1}\left(\sum_{n=1}^N A^n_{\tau} \right)\right] \nonumber\\
\stackrel{(a)}{=} & 
\mathbf{E}\left[Q^{tot,\g}_{T_{k-1}} + Q^{tot,\g}_{T_{k}}\right] \nonumber\\
\leq & 2 \frac{\lambda^{tot} N}{1-\lambda^{tot}};
\label{eq:qtapp}
\end{align}
equation (a) in \eqref{eq:qtapp} holds because of \eqref{eq:QTkeqA} and the last inequality in \eqref{eq:qtapp} follows from \eqref{eq:QTkbound}.

\end{proof}

%
%
%
%
%

\section*{Acknowledgment}
This work was partially supported by National Science Foundation (NSF) Grant CCF-1111061 and NASA grant NNX12AO54G.
The authors thank Vijay Subramanian for helpful discussions.

\ifCLASSOPTIONcaptionsoff
  \newpage
\fi



%
%

\bibliographystyle{ieeetr}
\bibliography{MACref}
%

\begin{IEEEbiographynophoto}{Yi Ouyang}(S'13)
received the B.S. degree in Electrical Engineering from the National Taiwan University, Taipei, Taiwan in 2009.
He is currently a Ph.D. student in Electrical Engineering and Computer Science at the University of Michigan, Ann Arbor, MI, USA.
His research interests include stochastic scheduling, decentralized stochastic control and stochastic dynamic games with asymmetric information.

\end{IEEEbiographynophoto}

\begin{IEEEbiographynophoto}{Demosthenis Teneketzis}(M'87--SM'97--F'00)
received the diploma in electrical engineering from the University of Patras, Patras, Greece, and the M.S.,
E.E., and Ph.D. degrees, all in electrical engineering, from the Massachusetts Institute of Technology,
Cambridge, MA, USA, in 1974, 1976, 1977, and 1979, respectively.

He is currently Professor of Electrical Engineering and Computer Science at the University of Michigan,
Ann Arbor, MI, USA. In winter and spring 1992, he was a Visiting Professor at the Swiss Federal Institute
of Technology (ETH), Zurich, Switzerland. Prior to joining the University of Michigan, he worked for Systems Control, Inc., Palo Alto, CA,USA, and Alphatech, Inc., Burlington, MA, USA.
His research interests are in stochastic control, decentralized systems, queueing and communication networks, stochastic scheduling and resource allocation problems, mathematical economics, and discrete-event systems.
\end{IEEEbiographynophoto}






\end{document}